\newcommand{\obl}{O}
\newcommand{\frb}{F}
\newcommand{\perm}{P}
\newcommand\bydef[1]{\xLeftrightarrow{\text{#1}}}
\newcommand\cDL{$c\mathcal{DL}$\xspace}
\newcommand\cdl{\cDL}
\newcommand{\Rej}{\textit{Rej}}
\newcommand\mydef{\mathrel{\overset{\makebox[0pt]{\mbox{\normalfont\tiny\sffamily def}}}{=}}}
\newcommand\sat{\models_{s}}
\newcommand\viol{\models_{v}}
\newcommand\qsat{\sttstile{}{}_{s}^p}
\newcommand\qsatv{\sttstile{}{}_{v}^p}
\newcommand\qsatun{\sttstile{}{}_{?}^p}
\newcommand\score{\rho}
\newcommand\guard{\rotatebox[origin=c]{-90}{$\bumpeq$}}
\newcommand\trans{\tau}
\newif\ifrevision
\newcommand{\new}[1]{{\ifrevision\color{blue}#1\else#1\fi}}
\begin{document}

\title{Synchronous Agents, Verification, and Blame --- A Deontic View}
\def\orcidID#1{\smash{\href{http://orcid.org/#1}{\protect\raisebox{-1.25pt}{\protect\includegraphics{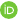}}}}}
\author{Karam Kharraz\inst{1}\orcidID{0000-0002-6908-1756} \and
Shaun Azzopardi \inst{2}\orcidID{0000-0002-2165-3698} \and
Gerardo Schneider \inst{2}\orcidID{0000-0003-0629-6853} \and Martin Leucker\inst{1} \orcidID{0000-0002-3696-9222}}

\authorrunning{Kharraz et al.}
\institute{ISP, University of Lübeck, Germany 
\email{\{kharraz,leucker\}@isp.uni-luebeck.de}\\ \and University of Gothenburg, Sweden
\email{\{shauna,gersch\}@chalmers.se}}

\maketitle

\begin{abstract}
A question we can ask of multi-agent systems is whether the agents' collective interaction satisfies particular goals or specifications, which can be either individual or collective. 
When a collaborative goal is not reached, or a specification is violated, a pertinent question is whether any agent is to blame.
This paper considers a two-agent synchronous setting and a formal language to specify when agents' collaboration is required. We take a \emph{deontic} approach and use \emph{obligations}, \emph{permissions}, and \emph{prohibitions} to capture notions of non-interference between agents. We also handle \emph{reparations}, allowing violations to be corrected or compensated. 
We give trace semantics to our logic, and use it to define blame assignment for violations. We give an automaton construction for the logic, which we use as the base for model checking and blame analysis. We also further provide quantitative semantics that is able to compare different interactions in terms of the required reparations. 
\end{abstract}
\section{Introduction}
Interaction between agents can be adversarial, where each agent pursues its own set of individual goals, or cooperative where the agents collaborate to achieve a collective goal. Verification techniques can help us detect whether such goals may be achieved. 
Agents may also interfere or not cooperate, 
at which point the failure to achieve a goal could be attributed to some agent. In this paper, we develop a \emph{deontic} logic allowing us to specify the anticipated interaction of two agents in the presence of such aspects.

	A deontic logic \cite{McNamara2006,von1951deontic} includes norms as first-class concepts, with \emph{obligations}, \emph{permissions}, and \emph{prohibitions} as basic norms. These concepts are crucial in legal documents and contractual relationships, where the agents are the parties to a contract.\footnote{We use \textit{party} and \textit{agent} interchangeably throughout.} 
	Norms are parameterised by actions/events or propositions and are used to specify what 
	\textit{ought to be}, 
	or the parties 
	\textit{ought to do}.

In this paper, interaction or cooperation of the agents is modelled as the interplay of the individual actions performed by each agent, leading to the concept of cooperative actions. Cooperative actions could be synchronous, i.e., actions at each time point of each agent are meant to describe the possible cooperation, or asynchronous, meaning that actions for cooperation may happen a different time points.\footnote{Observe similarities with synchronous and asynchronous communication.}  We choose synchrony as an abstraction to simplify the concept of cooperation and non-interference between parties. We also study only the setting with two rather than many parties. As such,  we are concerned with {\em two-party synchronous systems}, leaving extensions as future work.

     We re-purpose and extend the syntax of a deontic language from literature \cite{DBLP:conf/jurix/AzzopardiGP16,DBLP:conf/jurix/AzzopardiPS18} into a new deontic logic with denotational semantics appropriate for this two-party setting. Our semantics depends on two notions of {\em informative} satisfaction or violation, which talk about the exact point in time a contract is satisfied or violated. Other features of the logic include the ability to make contracts trigger on matching a regular language, requiring the satisfaction of a contract while one is still within the prefix language of a regular language, and a recursion operator to allow the definition of persistent contracts and repetition.
	
	We extend the semantics with a notion of {\em blame assignment}, to identify which party is responsible for a certain violation. We further use this to define quantitative semantics that counts the number of violations caused by a certain party, which can be used to compare different traces or behaviour of a party. 
	
	We give an exponential automata construction for the logic, transforming a contract specification into an automaton capable of identifying satisfaction, and violation as specified in our semantics. We also provide a model checking algorithm, which is quadratic in the size of the contract automaton, hence exponential in the size of the contract. We re-use this construction for blame analysis, but leave analysis for the quantitative semantics for future work.
	
	The paper organisation follows. Section~\ref{sec:prelim} lays out preliminaries, Section~\ref{sec:logic} presents our logic, and Section~\ref{sec:analysis} presents algorithms for model checking and blame analysis  through automata constructions. Related work is considered in Section~\ref{sec:relatedwork}, and we conclude in Section~\ref{sec:conclusion}.

\vspace{-5pt}
\section{Preliminaries}\label{sec:prelim}
\vspace{-5pt}
We write $\mathbb{N}_\infty$ for $\mathbb{N} \cup \{\infty\}$. Given a finite alphabet $\Sigma$, we write $\Sigma_0$, and $\Sigma_1$ for re-labellings of $\Sigma$ with party identifiers $0$ and $1$, and $\Sigma_{0,1}$ for $\Sigma_0 \cup \Sigma_1$. We use $P[x/y]$ to refer to the syntactic replacement of $x$ in $P$ with $y$, where $P$ can be an automaton ($x$ and $y$ are states), or a specification ($x$ and $y$ are syntactic objects in the language). We write $(*,s)$ to refer to all state pairs with $s$ in the second position, and similarly for $(s,*)$. \\
{\bf Traces} 
For $i \in \mathbb{N}$, $j \in \mathbb{N}_\infty$, and an infinite trace $w$ over sets of actions from a finite alphabet $\Sigma$, we denote the {trace between positions} $i$ and $j$ by $w[i..j]$, including the values at both positions. If $j < i$ then $w[i..j]$ is the empty trace. When $j = \infty$ then $w[i..j]$ is the suffix of $w$ from $i$. 
We write $w[i]$ for $w[i...i]$, and $w \cdot w'$ for concatenation of $w$ and $w'$, which is only defined for a finite word $w$.

Given two traces $w, w'$ over $2^\Sigma$, we define stepwise intersection: $(w \sqcap w')[i] \mydef w[i] \cap w'[i]$, union $(w \sqcup w')[i] \mydef w[i] \cup w'[i]$, and union with party labelling: $(w \sqcup^0_1 w')[i] \mydef w[i] \cup^0_1 w'[i]$, where $E \cup^0_1 E' \mydef \{a_0 \mid a \in E\} \cup \{a_1 \mid a \in E'\}$, i.e. the left actions are labeled by $0$ and the right actions by $1$.
This gives a trace in $\Sigma_{0,1}$.
For instance, given $w=\langle \{a\},\{b\},\{c,d\}\rangle$ and $w'=\langle \{a\},\{e\},\{d,e\}\rangle $, we have that $w[2] \cap w'[2] = \{c,d\}\cap \{d,e\}= \{d\}$ and $w[2] \sqcup^0_1 w'[2] = \{c,d\}\sqcup^0_1\{d,e\}= \{c_0,d_0,d_1,e_1\}$.

Given two traces $w_0$ and $w_1$, over $2^{\Sigma}$, we write $\Vec{w}_i^j$ for the pair $(w_0[i..j], w_1[i..j])$. 
$\Vec{w}_i^j$ is said to be an \textit{interaction}, and when $j \in \mathbb{N}$ a \textit{finite interaction}. Sometimes we abuse notation and treat $\Vec{w}_i^j$ as a trace in $\Sigma_{0,1}$, since it can be projected into such a trace through $\sqcup^0_1$.\\

\textbf{Automata} 
A tuple $A = \langle \Sigma, Q, q_0, \Rej, \rightarrow \rangle$ is an {\em automaton},
where $\Sigma$ is a finite alphabet, $S$ is a finite set of states, $s_0 \in S$ is the initial state, $\Rej \subseteq S$ is a set of rejecting states, and $\rightarrow \in S \times 2^\Sigma \rightarrow (2^S \setminus \emptyset)$ is the transition function ($\rightarrow \in S \times 2^\Sigma \rightarrow S$ when the automaton is deterministic). The language $L(A)$ of automaton $A$ is the set of infinite traces with no prefix reaching a rejecting state. The rejecting language $RL(A)$ of automaton $A$ is the set of infinite traces with a prefix reaching a rejecting state. We write $RL_s(A)$ for the rejecting language through a specific rejecting state $s \in \Rej$.

The \textit{synchronous product} of automata $A$ and $B$ over the same alphabet $\Sigma$, denoted by $A \| B$, is the automaton: $(\Sigma, S_A \times S_B, (s_{0_A}, s_{0_B}), (Rej_A \times S_B) \cup (S_A \times Rej_B), \rightarrow)$ where $\rightarrow$ is the minimal relation such that: for any $E \subseteq \Sigma$, if $s_1 \xrightarrow{E}_A s'_1 $ and $s_2 \xrightarrow{E}_B s'_2$ then  $(s_1, s_2) \xrightarrow{E} (s'_1, s'_2)$.

The \textit{relaxed synchronous product} of automata $A$ and $B$ over the same alphabet $\Sigma$, denoted by $A \|^{r} B$ includes $A \| B$ but allows moving independently when there is no match: if $s_1 \xrightarrow{E}_A s'_1 $ and $\nexists s'_2 \cdot s_2 \xrightarrow{E}_B s'_2$, then  $(s_1, s_2) \xrightarrow{E} (s'_1, s_2)$; and symmetrically.

\textbf{Moore Machines} 
A Moore machine is a 5-tuple $M = (S, s_0, \Sigma_I, \Sigma_O, \delta, \lambda)$ where $S$ is a finite set of states, $s_0 \in S$ is the initial state, $\Sigma_I$ and $\Sigma_O$ are respectively the finite set of input and output actions, $\delta : S \times 2^{\Sigma_I} \rightarrow 2^S$ is a transition function that maps each state and inputs to a next state, and $\lambda: S \rightarrow 2^{\Sigma_O}$ is an output function that maps each state to a set of outputs.

The \textit{product} of a Moore machine $M_1$ over input alphabet $\Sigma_I$ and output alphabet $\Sigma_O$, and Moore machine $M_2$ with flipped input and output alphabets is the automaton: $M_1 \otimes M_2 \mydef (\Sigma_I \cup \Sigma_O, S_1 \times S_2, (s_{0_1}, s_{0_2}), \emptyset, \rightarrow)$ where $\rightarrow$ is the minimal relation such that: for any states $s_1 \in S_1$ and $s_2 \in S_2$, where $s_1 \xrightarrow{\lambda_2(s_2)} s'_1 $ and $s_2 \xrightarrow{\lambda_1(s_1)} s'_2$ then  $(s_1, s_2) \xrightarrow{\lambda_1(s_1) \cup \lambda_2(s_2)} (s'_1, s'_2)$.

{\bf Regular Expressions} 
We use standard syntax for regular expressions. We treat as atomic boolean combinations of actions from $\Sigma_{0, 1}$.\todo{the previous text made it seem we have boolean combinations of regular expressions, as noted by reviewer 1, not it is clearer} The operators are standard: choice, $re + re$ (match either); sequence, $re;re$ (match the first then the second) and the Kleene plus, $re^+$ (match a non-zero finite amount of times in sequence). The language of a regular expression $re$ is a set of finite traces: $L(re) \subseteq (2^{\Sigma_{0,1}})^*$. We abuse notation and write $\Vec{w}_i^j \in L(re)$ for $w_0[i...j] \sqcup_1^0 w_1[i...j] \in L(re)$.

We restrict attention to the \textit{tight language} of a regular expression, containing matching finite traces that have no matching strict prefix: $TL(re) \mydef \{\Vec{w}_i^j \in L(re) \mid \nexists k : k < j \wedge \Vec{w}_i^k \in L(re)\}$. The {\em prefix closure} of the tight language is the set of finite prefixes of the tight language up to a match: $cl(re) \mydef \{\Vec{w}_i^k \mid \exists j : \Vec{w}_i^j \in TL(re) \wedge i \leq k < j\}$. We define the \textit{complement of the prefix closure} as the set of finite traces that do not tightly match the regular expression but whose maximal strict prefix is in the closure of the expression: $\overline{cl}(re) \mydef \{\Vec{w}_i^j \mid (\Vec{w}_i^{j - 1} \in cl(re) \wedge\Vec{w}_i^j \not\in cl(re) \wedge \Vec{w}_i^j \not\in TL(re))\}$. 

We denote by $A(re, s_0, s_{\checkmark}, s_{\times})$ the deterministic finite automaton corresponding to regular expression $re$, with $s_0$, and $s_{\times}$ respectively as the initial and rejecting states and, $s_{\checkmark}$ as a sink state, s.t. $\forall \Vec{w}_i^j \in TL(re) : s_0 \xRightarrow{\Vec{w}_i^j} s_{\checkmark}$, $\forall \Vec{w}_i^j \in {cl}(re) : \exists s: s_0 \xRightarrow{\Vec{w}_i^j} s \wedge s \neq s_{\checkmark} \wedge s \neq s_\times$, and $\forall \Vec{w}_i^j \in \overline{cl}(re) : s_0 \xRightarrow{\Vec{w}_i^j} s_{\times}$.

\section{A Deontic Logic for Collaboration}\label{sec:logic}

In this section, we present the syntax and semantics of \cDL, a deontic logic able to express the extent to which parties should cooperate and non-interfere.

\begin{definition}[\cdl Syntax]
	\label{def:cdl} A \cDL contract $C$ is given by the following grammar, given an alphabet $\Sigma$, regular expressions $re$, a set of variables $\mathbb{X}$, and party labels $p$ from $\{0,1\}$:
	\begin{align*}
		a \in &~ \Sigma_0 \cup \Sigma_1\\
		N :=\ &~\obl_p(a) \mid \frb_p(a) \mid \perm_p(a)\mid \top \mid \perp\\
		C :=\ &~N  \mid C \wedge C \mid C ; C \mid C \blacktriangleright C \mid \\&~\langle re \rangle C \mid re \guard C \mid \mathbb{X} \mid rec X. C
	\end{align*}
\end{definition}
Our setting is that of two-party systems, with one party indexed with $0$ and the other with $1$. As the basic atoms of the language, we have \emph{norms}. These norms are labeled by the party that is the main subject of the norm, and the action that is normed: $\obl_p(a)$ denotes an obligation for party $p$ to achieve $a$; $\frb_p(a)$ denotes a prohibition for party $p$ from achieving $a$, and $\perm_p(a)$ denotes a permission/right for party $p$ to achieve $a$.

We call \cdl specifications \textit{contracts}. Contracts include norms, the atomic satisfied ($\top$), and the transgressed ($\perp$) contract. Contracts can be \emph{conjuncts} ($\wedge$) and \emph{sequentially composed} ($;$). A contract may repair the violation of another ($C \blacktriangleright C'$ means that $C'$ is the \emph{reparation} applied when $C$ is violated). 

Contracts can be triggered when a regular expression matches tightly ($\langle re \rangle C$). A regular expression can also guard $\guard$ a contract $C$, such that an unrecoverable mismatch with it removes the need to continue complying with $C$ in ($re \guard C$).

We allow \textit{recursive} definitions of contracts ($rec~X . C$), where $X \in \mathbb{X}$, with some restrictions. First, we do not allow a contract to have two recursive sub-contracts using the same variable name. Secondly, we have some syntactic restrictions on the contract $C$ appearing inside of the recursion: $C$'s top-level operator is always a sequence, or a regular expression trigger contract, with $X$ only appearing once and on the right-hand side of a sequence, i.e., the expression must be tail recursive. We also require an additional restriction for recursion with the reparation operator: the reparation has to either not be the last operation before $X$ or the whole recursion should be guarded with $re\guard$, the reason behind it is to avoid the procrastination dilemma \cite{jackson1985semantics}.   For example,   $rec~X.\langle re\rangle((C \blacktriangleright C'); X)$ and $ re \guard (rec~X.C \blacktriangleright X)$ are valid, unlike $rec~X.X$, $rec~X.C;(C' \wedge X)$, $rec~X.\langle re\rangle((C ;X); C')$, and $rec~X.C \blacktriangleright X$. Moreover, a recursion variable $X \in \mathbb{X}$ must always be bound when it appears in a contract. 

In our setting, we want to be able to talk about collaborative actions (actions that require both parties to be achieved successfully) and non-interference between the parties (a party not being allowed to interfere with the other party carrying out a certain action). We model both of these using a notion of synchronicity. We will later represent parties as Moore machines; here we talk just about their traces.

We assume two traces over $2^\Sigma$, one for each party: $w_0$ and $w_1$. A party's trace is a record of which actions were enabled (or attempted) by that party. The step-wise intersection of these traces, $w_0 \sqcap w_1$, is the trace of \textit{successful} actions.
Restricting attention to the successful actions misses information about attempts that were not successful. Instead, we give semantics over pairs of party traces, an \textit{interaction}, rather than over $w_0 \sqcap w_1$, allowing us to localise interference.
This setting allows us to model both collaboration and non-interference between the parties in the same way. If the parties are required to collaborate on an action, then they must both propose it  (\textit{obligation}). If instead, the parties should ensure an action is not successful, then at least one of them must not enable it (\textit{prohibition}). If a party is required to not interfere with another party's action, then they must also enable it (\textit{permission}). We refer to actions of one party variously as \emph{proposed}, \emph{attempted}, or \emph{enabled} by that party. We consider an example specification in our language.
\begin{example}\label{ex:robots}
	Consider two possibly distinct robots, 0 and 1, working on a factory floor, with their main goal being to cooperate in placing incoming packages on shelves. Each robot has sensors to identify when a new package is in the queue (\textit{detectProd}), and they must lift the package together (\textit{lift}), and place it on a shelf (\textit{putOnShelf}). Between iterations of this process, the robots are individually allowed to go to their charging ports (\textit{charge0} or \textit{charge1}). If a robot does not help in lifting, it is given another chance:
	\begin{align*}
		\textit{permitCharge} \mydef&~ P_0(\textit{charge0}) \wedge P_1(\textit{charge1})\\
		\textit{lift}(p) \mydef&~ O_p(\textit{lift}) \blacktriangleright O_p(\textit{lift})\\
		\textit{detect\&Lift}(p) \mydef&~ \langle \textit{detectProd}_p \rangle \textit{lift}(p)\\
		\textit{detect\&Place} \mydef&~ (\textit{detect\&Lift}(0) \wedge  \textit{detect\&Lift}(1))\, ; \\&\, (O_{0}(\textit{putOnShelf}) \wedge O_{1}(\textit{putOnShelf}))\\
		\textit{collabRobot} \mydef&~ rec~X. \textit{permitCharge};\textit{detect\&Place};X.
	\end{align*}
\end{example}
\subsection{Informative Semantics}

\begin{figure}[t]
	\[\begin{array}{l@{\quad}c@{\quad}l}
		\Vec{w}_i^j \sat \top 
		&\mydef&~\ i = j \\
		\Vec{w}_i^j \sat\ \perp 
		&\mydef& ~\  \textit{false} \\
		\Vec{w}_i^j \sat \obl_p(a) 
		&\mydef &~\ i = j \wedge a \in w_0[i] \text{ and } a \in w_1[i]
		\\
		\Vec{w}_i^j \sat \frb_p(a) 
		&\mydef&~\ i = j \wedge a \not\in w_p[i] \text{ or } a \not\in w_{1-p}[i]\\
		\Vec{w}_i^j \sat \perm_p(a) 
		&\mydef&~\ i = j \wedge a \in w_p[i] \text{ implies } a \in w_{1-p}[i]) 
		\\
		\Vec{w}_i^j \viol N
		&	\mydef &~\ i = j \wedge \Vec{w}_i^j \not\sat N  \\\\
		\Vec{w}_i^j \sat \langle re \rangle C 
		& \mydef&~\ \Vec{w}_i^j \in \overline{cl}(re) \text{ or } (\exists k < j: \Vec{w}_i^k \in TL(re) \text{ and } \Vec{w}_{k+1}^j \sat C)  \\
		\Vec{w}_i^j \viol \langle re \rangle C 
		&	\mydef &~\ \exists k < j: \Vec{w}_i^k \in TL(re)\text{ and } \Vec{w}_{k+1}^j \viol C\\
		
		\Vec{w}_i^j \sat re\guard C 
		& \mydef&~\ (\Vec{w}_i^j \in \overline{cl}(re) \cup TL(re)    \text{ and } \nexists k<j :  (\Vec{w}_i^k  \viol C )) \\ 
		&  &~\ \text{ or } (\Vec{w}_i^j\in cl(re)   \text{ and }\Vec{w}_i^j \sat C)  \\ 
		\Vec{w}_i^j \viol re \guard C 
		&	\mydef &~\ \Vec{w}_i^j \in cl(re) \text{ and } \Vec{w}_i^j \viol C\\ 
		\Vec{w}_i^j \sat C \wedge C' 
		& \mydef &~\    \Vec{w}_i^k \sat C \text{ and } \Vec{w}_i^l \sat C'  \text{ and } j = max(k,l)\\
		\Vec{w}_i^j \viol C \wedge C' & \mydef &~\ (\Vec{w}_i^j \viol C  \text{ or } \Vec{w}_i^j \viol C') \text{ and } \nexists  k < j : \Vec{w}_i^{k} \viol C \wedge C' \\
		\Vec{w}_i^j \sat C ; C' 
		& \mydef &~\  \exists k < j: \Vec{w}_i^k \sat C \text{ and }\Vec{w}_{k+1}^j \sat C' 
		\\
		\Vec{w}_i^j \viol C ; C' 
		&\mydef &~\ (\exists k < j:  \Vec{w}_i^k \sat C \text{ and } \Vec{w}_{k+1}^j \viol C')  \text{ or } \Vec{w}_i^j \viol C\\
		\Vec{w}_i^j \sat C \blacktriangleright C' 
		& \mydef &~\ \Vec{w}_i^j \sat C \text{ or } (\exists k < j: \Vec{w}_i^k \viol C \text{ and } \Vec{w}_{k+1}^j \sat C') \\
		\Vec{w}_i^j \viol C \blacktriangleright C' 
		& \mydef &~\ \exists k < j : \Vec{w}_i^k \viol C  \text{ and } \Vec{w}_{k+1}^j \viol C'  \\
		\Vec{w}_i^j \sat rec~X.C 
		&\mydef&~\ \Vec{w}_i^j \sat C[X \backslash rec~X . C]  \\	 
		\Vec{w}_i^j \viol rec~X.C 
		&\mydef&~\ \Vec{w}_i^j \viol C[X \backslash rec~X . C]\\\\
		\Vec{w}_i^j \vDash_? C &\mydef&~\ \nexists k \leq j: \Vec{w}_i^k \sat C \text{ or } \Vec{w}_i^k \viol C
	\end{array}\]
	\caption{Informative semantics rules over a finite interaction $\Vec{w}_i^j$.}
	\label{fig:informative}
\end{figure}

The semantics of our language is defined on an \textit{interaction}, i.e. a pair of traces $w_0$ and $w_1$, restricting our view to a slice with a minimal position $i$ and maximal one $j$. For the remainder of this paper, we will refer to this interaction with $\Vec{w}_i^j$. 

In Figure \ref{fig:informative}, we introduce the semantic relations for \textit{informative} satisfaction ($\sat$) and violation ($\viol$). These capture the moment of satisfaction and violation of a contract in a finite interaction. We use this to later define when an infinite interaction models a contract. In Figure \ref{fig:informative} we also capture with $\vDash_?$, when the interaction slice neither informatively satisfies nor violates the contract.

We give some intuition and mention interesting features of the semantics. Note how we only allow the status of atomic contracts to be informatively decided in one time-step (when $i = j$), given they only talk about one action. When it comes to the trigger contract, our goal is to confirm its fulfillment only when we no longer closely align with the specified trigger language. Alternatively, we consider it satisfied if we've matched it previously and subsequently maintained compliance with the contract. Conversely, we would classify a violation if we achieved a close match but then deviated from the contract's terms.
	Regarding the regular expression guard, we have two scenarios for evaluating satisfaction. First, we ensure satisfaction when either we have precisely matched the language or have taken actions preventing any future matching of the guard, with no prior violations or the guarded contract. Second, we verify satisfaction when there's still a possibility of a precise match of the guard, and the guarded contract has already been satisfied. In contrast, a violation occurs when there remains a chance for a precise match in the future of the guard, and a violation of the sub-contract occurs.
	
	The definitions for conjunction and sequence are relatively simple. Note that for conjunction we take the maximum index at which both contracts have been satisfied. Sequence and reparation are similar, except in reparation we only continue in the second contract if the first is violated, while we violate it if both contracts end up being violated. For recursion, we simply re-write variable $X$ as needed to determine satisfaction or violation.

\begin{example}\label{example:2}
	Note how the semantics ensure that, given traces $w_0$ and $w_1$ such that $w_0[0] = w_1[1] = \{\textit{charge0}, \textit{charge1}\}$ then $\Vec{w}_0^0 \sat permitCharge$, i.e. both robots try to charge and allow each other to charge. But if further $w_0[1..3] = \langle \{\textit{detectProd}\},\{\textit{lift}\}, \{\textit{lift}\}\rangle$ and $w_1[1..3] = \langle \{\},\{\}. \{\} \rangle$, then $\Vec{w}_0^3 \viol CollabRobot$, since robot 0 attempted a lift but robot 1 declined helping in lifting.
\end{example}

\begin{toappendix}
	\begin{lemma}\label{lem:recfree}
		Given an interaction, we can always transform a contract with top-level recursion into an equivalent one that is free of recursion.
	\end{lemma}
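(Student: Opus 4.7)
The plan is to exploit the finiteness of the interaction $\Vec{w}_i^j$, together with the tail-recursive syntactic restrictions on $rec~X.C$, to bound the number of syntactic unfoldings required. The semantic rule $\Vec{w}_i^j \sat rec~X.C \iff \Vec{w}_i^j \sat C[X \backslash rec~X.C]$ (and analogously for $\viol$) is already a single unfolding step, so I only need to show that finitely many such steps suffice for a given finite interaction, after which the innermost occurrence of $rec~X.C$ can be replaced by a recursion-free sentinel without changing the relation on $\Vec{w}_i^j$.

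First, I would establish the central quantitative observation. By the syntactic restriction, $C$ has top-level operator either sequence or regular-expression trigger, and $X$ appears exactly once, on the right-hand side of a sequence. Inspecting the semantics in Figure~\ref{fig:informative}, both the sequence case ($\exists k<j:\Vec{w}_i^k\sat C'\text{ and }\Vec{w}_{k+1}^j\sat C''$) and the trigger case (requiring $\Vec{w}_i^k\in TL(re)$ with $k<j$ before the inner contract is evaluated on $\Vec{w}_{k+1}^j$) force the sub-interaction passed to $X$ to be strictly shorter than $\Vec{w}_i^j$. Hence, if $n = j-i+1$, any derivation establishing $\Vec{w}_i^j\sat rec~X.C$ (or $\viol$) can visit the recursion variable at most $n$ times.

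Next, I would define the transformation by induction on $n$: syntactically unfold $rec~X.C$ exactly $n$ times, obtaining $C[X\backslash C[X\backslash\cdots C[X\backslash rec~X.C]\cdots]]$, and replace the innermost $rec~X.C$ by $\perp$. By the bound above, any evaluation against $\Vec{w}_i^j$ that reaches this innermost occurrence must do so on an empty sub-interaction $\Vec{w}_{k+1}^j$ with $k+1>j$, which cannot satisfy or violate any contract of the required shape (sequence/trigger need $k<j$; the atomic cases need $i=j$). Replacing by $\perp$ therefore preserves both $\sat$ and $\viol$ on $\Vec{w}_i^j$. Nested recursions bound by distinct variables are handled by applying the transformation inductively from the inside out, using the no-shadowing restriction to ensure well-defined substitution.

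The main obstacles are the reparation and guard cases. For $C\blacktriangleright C'$ appearing in a recursive position, the syntactic side-condition (the reparation is either not the last operation before $X$, or the entire recursion is wrapped in $re\guard$) is precisely what rules out the procrastination scenario where an unbounded chain of $\blacktriangleright$-unfoldings never consumes input; I would discharge this by a case analysis showing that in both permitted shapes each unfolding still strictly shortens the sub-interaction or triggers the guard's termination via $\overline{cl}(re)$. For $re\guard(rec~X.C)$, once $\Vec{w}_i^j\in\overline{cl}(re)\cup TL(re)$ the guard semantics already discharges the contract without consulting further unfoldings, so the same $n$-fold unfolding suffices. The careful bookkeeping across these cases is the most delicate part of the argument, but in each instance the strictly-decreasing interaction length provides the well-founded measure driving the induction.
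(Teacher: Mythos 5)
Your proof follows essentially the same route as the paper's: use the tail-recursion restriction to argue that each unfolding consumes at least one position of the finite interaction, unroll $O(j-i)$ times, and cap the residual recursive occurrence with a sentinel that is never consulted. The only cosmetic difference is that you substitute $\perp$ where the paper substitutes $\top$; since the $\exists k<j$ side-conditions in the sequence, trigger, and reparation rules guarantee the innermost occurrence is never actually evaluated on $\Vec{w}_i^j$, either choice works, and your explicit treatment of the reparation and guard cases is a welcome elaboration rather than a deviation.
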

	\begin{proof}
		We recall that recursion here is restricted, such that $X$ never appears at the top-level of the contract, but always on the right-hand side of a contract that requires another contract or regular expression to be evaluated before. Thus, before the meaning of $X$ needs to be resolved there is always a time-step before where another part of the contract is being evaluated. Thus, given a finite interaction $\Vec{w}_i^j$, which has the size $j - i + 1$, we can always unroll all the recursion in a contract $j - i$ times. By replacing the remaining recursive contracts with $\top$, we get a contract without recursion that is equivalent to the original one for all finite traces of size $j - i$ for informative satisfaction and violation.  \qed
	\end{proof}
\end{toappendix}

Then, we show that if a contract is informatively satisfied (violated), then any suffix or prefix of the interaction cannot also be informatively satisfied (violated):
\begin{lemmarep}[Unique satisfaction and violation]\label{lem:uniquesatunsat}
	If there exists $j$ and $k$ such that $\Vec{w}_i^j \sat C$ and  $\Vec{w}_i^k \sat C$, then $j=k$. Similarly, if there exists $j$ and $k$ such that $\Vec{w}_i^j \viol C$  and $\Vec{w}_i^k \viol C$ then $j = k$.
\end{lemmarep}
\begin{appendixproof}
	We prove this by structural induction (here we use \textit{satisfies}/\textit{violates} for \textit{informatively satisfies}/\textit{violates}):
	\paragraph{Base Case}: For $\top$, $\perp$, $\obl_p$, $\perm_p$, and $\frb_p$ this holds immediately by definition.
	
	\paragraph{Inductive Hypothesis (IH)}: We assume the lemma holds for contracts $C$ and $C'$, for any $i$, $j$, and $k$.
	\paragraph{Inductive Step}: We prove the result holds for contracts built over $C$ and $C'$:
	
	\begin{itemize}
		\item  $\langle re \rangle C$:
		\begin{enumerate}
			\item By definition of $\sat$, we have two different cases when a formula is informatively satisfied:
			\begin{enumerate}
				\item Either there is a point in time ($m$) at which the regular expression tightly matches (which by definition of $TL$ we are assured happens at only one point in a trace), and that $C$ holds on the rest of trace; applying the IH here (setting $i$ to $m$) we are assured the lemma holds for $\langle re \rangle C$.
				\item Or there is no such point in time, but instead $j$ and $k$ are points in time where the trace breaks the safety of the regular expression. The definition here requires that every strict prefix of the trace is in $cl(re) \setminus TL(re)$ and that the full (finite) trace is in $\overline{cl}(re)$. The first condition ensures no prefix is in $\overline{cl}(re)$, by their respective definitions. This ensures that once a $j$ is found where this case holds, there is no bigger (or smaller) $k$, ensuring the lemma.
			\end{enumerate}      
			\item By definition of $\viol$, we have one case, where is a point in time where the regular expression tightly matches, and the contract is violated. Since the point of tight matching $m$ is unique, and by applying the IH on $C$ (setting $i$ to $m$), then the lemma follows.
		\end{enumerate}
		\par
		\item $ re \guard C$: 
		\begin{enumerate}
			\item By definition of $\sat$, we have two different cases when a formula is informatively satisfied:            
			\begin{enumerate}
				\item The trace is in ${cl}(re)$ and it satisfies $C$, and thus the result follows by simply applying IH on $C$.
				\item The trace is in $\overline{cl}(re)$ and there is no prefix that violates $C$. The definition of $\overline{cl}(re)$ ensures that any prefix in not in ${cl}(re)$, and also the definition of ${cl}(re)$ ensures that any extension of $\overline{cl}(re)$ is not in ${cl}(re)$. And thus the other case does not hold for strict prefixes and suffixes of a trace that satisfies this case. Moreover, the definition of $\overline{cl}(re)$ moreover ensures no strict prefix or suffix of the trace is also in $\overline{cl}(re)$, ensuring the required result.
			\end{enumerate}
			\item By definition of $\viol$, we have one case, where the trace is in ${cl}(re)$, and the contract is violated, and thus the required result follows by simply applying the IH on $C$.
		\end{enumerate}
		\item $C \wedge C'$: 
		\begin{enumerate}
			\item For satisfaction, the result follows by applying the IH in both $C$ and $C'$, and taking their maximum. Since the point of satisfaction for both is assured to be unique, so is their maximum.
			\item For violation, one of the contracts must violate (and thus apply the IH on that contract). Both contracts may be violated at different times, but the second requirement of the definition ensures that we are taking the minimum of such time points.
		\end{enumerate}  
		\item $C;C'$: 
		\begin{enumerate}
			\item For satisfaction, the result follows easily by first applying the IH on $C$ and then on $C'$ (setting $i$ to the unique point of satisfaction/violation of $C$).
			\item For violation, the result follows similarly, by applying, for the first case, the satisfaction part of the IH on $C$, and the violation part on $C'$, and for the second case the violation of IH on $C$.
		\end{enumerate}
		
		\item $C \blacktriangleright C'$:
		\begin{enumerate}
			\item For satisfaction we have two cases:
			\begin{enumerate}
				\item  $C$ is satisfied, and the result holds by the IH.
				\item  $C$ is violated, which the IH ensures this is at a unique time point; and $C'$ is then satisfied from that time point, which the IH also ensures is at a unique time point.
			\end{enumerate} 
			\item For violation we have one case, $C$ is violated (at a unique time point, by IH), and from that point, $C'$ is also violated at the end of the trace, which is ensured to be a unique time point, by IH.
		\end{enumerate} 
		
		\item $rec~X. C$: We apply \ref{lem:recfree}, and rely on the inductive hypothesis. \qed
	\end{itemize}
\end{appendixproof}

\begin{proofsketch}
	For the atomic contracts, this is clear. By structural induction, the result follows for conjunction, sequence, and reparation. For the trigger operations, the definition of $TL$ ensures the result. For recursion, note how given a finite interaction there is always a finite amount of times the recursion can be unfolded (with an upper bound of $j-i$) so that we can determine satisfaction or violation in finite time. (See the Appendix for a detailed proof.\footnote{All the proofs of lemmas, propositions and theorems of this and next section may be found in the appendix.})
\end{proofsketch}

If an interaction is not informative for satisfaction, it is not necessarily informative for violation, and vice-versa. But we can show that if there is a point of informative satisfaction then there is no point of informative violation.

\begin{lemmarep}[Disjoint satisfaction and violation]\label{lem:satimpliesnotviol}
	Informative satisfaction and violation are disjoint:  there are no $j,k$ s.t. $\Vec{w}_i^j \sat C$ and $\Vec{w}_i^k \viol C$. 
\end{lemmarep}
\begin{appendixproof}
	We prove this by structural induction that for all possible words, it is impossible for there to be prefixes of an interaction that both informatively satisfy and violate the same contract. 
	\paragraph{Base Case}: For Contracts of $\top$, $\perp$, $\obl_p$, $\perm_p$, and $\frb_p$ this holds by definition.
	\paragraph{Inductive Hypothesis (IH)}: We assume the lemma holds for contracts $C$ and $C'$, for any $i$, $j$, and $k$.
	\paragraph{Inductive Step}: We prove the result holds for contracts built over $C$ and $C'$:
	
	Throughout we start by assuming that the contract is satisfied between $i$ and $j$, and violated between $i$ and $k$. Then we consider all the possible three cases, $j$ is less than, bigger than, or equal to $k$. 
	\begin{itemize}
		\item $\langle re \rangle C$ -- For satisfaction, we assume that either: (1) $\Vec{w}_i^j$ is in the complement closure of $re$, or (2) the prefix can be split into two finite traces, where the first tightly matches $re$ and the other informatively satisfies $C$.
		\begin{enumerate}
			\item $j < k$: Given (1), consider that traces in the complement closure cannot have prefixes in the tight language, or be in the tight language. For (2), the prefix in the tight language is unique, thus the result follows by using the inductive hypothesis on the suffix and $C$.
			
			\item $j > k$: Given (1), the contract cannot be violated, since suffixes in the complement closure cannot be in the tight language. For (2), as before, the prefix in the tight language is unique, thus the result follows by using the inductive hypothesis on the suffix and $C$.
			
			\item $j = k$: Consider that a trace cannot be in the complement closure and the tight language at the same time, while the inductive hypothesis discharges the second proof obligation, as for the other cases.
		\end{enumerate}
		\item $re \guard C$ -- For satisfaction, we assume that either: (1) $\Vec{w}_i^j$ is in the complement closure or tight language of $re$ and the contract was not previously violated, or (2) $\Vec{w}_i^j$ is in the closure, and the contract is informatively satisfied.  
		\begin{enumerate}
			\item $j < k$: Given (1), and that no extension of a trace in the complement closure can be in the closure, the result follows. Given (2), the result follows by the inductive hypothesis.
			
			\item $j > k$: The result needed (no previous violation) follows immediately in case (1). For (2), the result follows by the inductive hypothesis.
			
			\item $j = k$: Given (1), violation cannot be the case given a trace cannot both be in the complement closure and the closure. Given (2) the result follows by the inductive hypothesis.
		\end{enumerate}
		\item $C \wedge C'$ -- The result here follows easily from the inductive hypothesis.
		
		\item $C ; C'$ -- Satisfaction requires $\Vec{w}_i^j$ to be split into two, where $\Vec{w}_i^l$ satisfies $C$ and $\Vec{w}_{l + 1}^j$ satisfies $C'$.
		\begin{enumerate}
			\item $j < k$: By the inductive hypothesis, $C$ must be satisfied on a prefix of $\Vec{w}_i^k$, then $C'$ is violated on the remaining suffix. However, this suffix must be a prefix of $\Vec{w}_{l + 1}^j$, thus the result follows the inductive hypothesis.
			\item The proof for the remaining cases mirrors the previous cases' proof.
		\end{enumerate}
		
		\item $C \blacktriangleright C'$ --- The proof follows similarly to that of sequence.
		
		\item $rec X.C$ -- We apply \ref{lem:recfree}, and rely on the inductive hypothesis.
	\end{itemize}
\end{appendixproof}
\begin{proofsketch}
	The proof follows easily by induction on the structure of C. 
\end{proofsketch}

We can then give semantics to infinite interactions.
\begin{definition}[Models]\label{def:infinitesat} For an infinite interaction $\Vec{w}_0^\infty$, and a \cdl contract $C$, we say $\Vec{w}_0^\infty$ models a contract $C$, denoted by $\Vec{w}_0^\infty \vDash C$, when there is no prefix of the interaction that informatively violates $C$:
	$\Vec{w}_0^\infty \vDash C \mydef \nexists k \in \mathbb{N} \cdot \Vec{w}_0^k \viol C$.
\end{definition}

\subsection{Blame Assignment}
\begin{figure}[t]
	\[\begin{array}{l@{\quad}c@{\quad}l}
		\Vec{w}_i^j \viol^p \top & \mydef& \textit{false}  \\
		\Vec{w}_i^j \viol^p \bot &\mydef& \textit{false}\\
		\Vec{w}_i^j \viol^p \obl_{1-p}(a) & \mydef & i = j \wedge  a \in w_{1-p}[i] \text{ and } a \not\in w_{p}[i] 
		\\ \Vec{w}_i^j \viol^p \obl_p(a) & \mydef & i = j \wedge  a \not\in w_{p}[i]\\
		\Vec{w}_i^j \viol^p \frb_{1-p}(a) &\mydef& \textit{false}
		\\
		\Vec{w}_i^j \viol^p \frb_p(a) &\mydef& i = j \wedge a \in w_p[i] \text{ and } a \in w_{1-p}[i]\\
		\Vec{w}_i^j \viol^p \perm_{1-p}(a) & \mydef & i = j \wedge a \in w_{1-p}[i] \text{ and } a \not\in w_p[i]\\
		\Vec{w}_i^j \viol^p \perm_p(a) &\mydef& \textit{false}\\
		\Vec{w}_i^j \viol^p \langle re \rangle C & \mydef & \exists k < j : \Vec{w}_i^k \sat TL(re) \text{ and } \Vec{w}_{k+1}^j \viol^p C 
		\\
		\Vec{w}_i^j \viol^p re \guard C 
		&\mydef& \Vec{w}_i^j \in cl(re) \text{ and } \Vec{w}_i^j \viol^p C 
		\\
		\Vec{w}_i^j \viol^p C \wedge C' &\mydef& (\Vec{w}_i^j \viol^p C 
		\text{  or  } \Vec{w}_i^j \viol^p C') \text{ and }  \\
		& & (\nexists k < j : \Vec{w}_i^k \viol^{1-p} C \wedge C') \text{ and } \\ & &\lnot(\textit{conflict}(C,C',\Vec{w}_i^{j-1}))\\
		\Vec{w}_i^j \viol^p C \blacktriangleright C' &\mydef & \exists k : \Vec{w}_i^k \viol C \text{ and } \Vec{w}_{k+1}^j \viol^p C'\\
		\Vec{w}_i^j \viol^p rec~X . C & \mydef & \Vec{w}_i^j \viol^p C[X \backslash rec~X . C]
	\end{array}\]
	\caption{Blame semantics rules over a finite interaction $\Vec{w}_i^j$}
	\label{fig:blame}
\end{figure}

We are not interested only in whether a contract is satisfied or violated, but 
also on
\emph{causation} and \emph{responsibility} \cite{ChocklerH03,halpern2015,chockler2004responsibility}. Here we give a relation that identifies when a party is responsible for a violation at a certain point in an interaction. Blame assignment could be specified following multiple criteria, we assign blame when an agent neglects to perform an action it is obliged to do or that another agent is obliged to do (passive blame), or for attempting to do an action it is forbidden from doing (active blame). The blame is forward looking where we identify the earliest cause of violation. Furthermore, we are only interested in causation and not on more advanced features such as "moral responsibility" or "intentionality". The blame semantics is only defined as a violation by party $p$ relation as in $\viol^p$. This semantics is defined in Figure~\ref{fig:blame}.

For blame assignment, the labeling of norms with parties is crucial. 
Here we give meaning to these labels in terms of who is the main subject of the norm in question. For example, consider that $O_0(a)$ can be violated in three ways: either (i) both parties do not attempt $a$, (ii) party $0$ does not attempt $a$ but party $1$ does, or (iii) party $0$ attempts $a$ but party $1$ does not. Our interpretation is that since party $0$ is the main subject of the obligation, party $0$ is blamed when it does not attempt $a$ (cases (i) and (ii)), but party $1$ is blamed when it does not attempt $a$ (case (iii)). The intuition is that by not attempting $a$, party $0$ violated the contract, thus relieving party $1$ of any obligation to cooperate or non-interfere (given party $0$ knows there is no hope for the norm to be satisfied if they do not attempt $a$). We use similar interpretations for the other norms.

Another crucial observation is that violations of a contract are not necessarily caused by a party. For example, the violated contract $\perp$ cannot be satisfied. Moreover, norms can conflict, e.g., $O_p(a) \wedge F_p(a)$. Conflicts are not immediately obvious without some analysis, e.g., $\langle re\rangle O_p(a) \wedge \langle re'\rangle F_p(a)$ (where there is some interaction for which $re$ and $re'$ tightly match at the same time). We provide machinery to talk about conflicts, to avoid unsound blaming, by characterising two contracts to be conflicting when there is no way to satisfy them together.

\begin{definition}[Conflicts]
	Two contracts $C$ and $C'$ are in conflict after a finite interaction $\Vec{w}_i^j$ if at that point their conjunction has not been informatively satisfied or violated yet, but all possible further steps lead to its violation:
		$\textit{conflict}(C,C', \Vec{w}_i^j) \mydef  \nexists \Vec{w}' : \Vec{w'}_i^j = \Vec{w}_i^j \wedge \Vec{w'}_i^{j + 1} \not\viol C \wedge C'$.
\end{definition}

Another instance of a conflict can be observed between $C_1= \obl_{0}(a); \frb_{1}(c)$ and $C_2= \obl_{0}(b) \blacktriangleright \obl_{0}(c)$ at the second position. This can be demonstrated with a trace of length one, $\langle a_0 ; a_1 \rangle$, where the obligation to achieve $c$ for party $0$ and the prohibition to achieve $c$ for party $1$ have to be enforced simultaneously.

\begin{example} Recall the violating example in Example.~\ref{example:2}, where robot 1 declines in helping lifing, twice. Clearly in that case $\Vec{w}_0^3 \viol^1 \text{ collabRobot }$. However, if robot 0 did not attempt a \textit{lift} in position 3 (i.e., to attempt to satisfy the reparation), the blame would be on the other agent.
\end{example}

From the definition of blame it easily follows that a party is blamed for a violation only when there is a violation:

\begin{proposition}\label{blameviol}
	If a party $p$ is blamed for the violation of $C$ then $C$ has been violated:
	$\exists p \cdot \Vec{w}_i^j \viol^p C$ implies $\Vec{w}_i^j \viol C$.
\end{proposition}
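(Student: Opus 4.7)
I would proceed by structural induction on $C$. For the base cases $\top$ and $\bot$ the blame relation is defined as \emph{false}, so the implication is vacuous. For each norm $N$ I would compare the non-false blame clauses in Figure~\ref{fig:blame} against the corresponding line of Figure~\ref{fig:informative}, recalling that $\Vec{w}_i^j \viol N \mydef i = j \wedge \Vec{w}_i^j \not\sat N$: in every variant the blame clause already fixes $i = j$ and imposes a condition on $w_0[i], w_1[i]$ that directly falsifies $\sat N$, which matches $\viol N$ line by line.

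For the inductive step I would assume the property for $C$ and $C'$. The trigger $\langle re \rangle C$, guard $re \guard C$, and reparation $C \blacktriangleright C'$ cases should all follow a uniform pattern: the blame rule exposes a witness position (a tight match, a closure/complement-closure membership, or a prior violation of $C$) together with a residual $\viol^p$ obligation on a sub-contract; I would invoke the inductive hypothesis on that sub-contract and read off informative violation of the whole contract from the matching clause of Figure~\ref{fig:informative} using the same witness. The recursion case $rec~X.\,C$ I would reduce to a recursion-free contract via Lemma~\ref{lem:recfree} and then apply the IH.

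The hard part will be the conjunction case. From $\Vec{w}_i^j \viol^p C \wedge C'$ the IH supplies $\Vec{w}_i^j \viol C$ or $\Vec{w}_i^j \viol C'$, but the informative violation rule for $\wedge$ additionally demands that no $k < j$ already satisfies $\Vec{w}_i^k \viol C \wedge C'$. To close this gap I would argue as follows: by Lemma~\ref{lem:uniquesatunsat} any earlier violation of the \emph{same} conjunct that is violated at $j$ contradicts uniqueness, so an earlier witness would have to come from the other conjunct; I would then rule out this residual possibility by combining the two remaining blame-side conditions, namely the absence of earlier $\viol^{1-p}$ for $C \wedge C'$ and $\lnot \mathit{conflict}(C, C', \Vec{w}_i^{j-1})$. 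The delicate sub-step is showing that an earlier informative violation of the opposite conjunct would either have to carry a blame label by $1-p$ (contradicting the first side condition) or witness an inherent conflict at $\Vec{w}_i^{j-1}$ (contradicting the second), so I would expect the bulk of the proof's work to sit precisely here.
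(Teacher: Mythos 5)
Your overall strategy --- a clause-by-clause comparison of Figure~\ref{fig:blame} against Figure~\ref{fig:informative}, organised as a structural induction --- is exactly the content of the paper's (one-line) proof, and your treatment of the norms, $\top$, $\bot$, trigger, guard, reparation, and recursion cases is fine. You are also right that conjunction is the one case where the clause-by-clause reading does not immediately go through, and you have correctly isolated the obstacle: $\viol C \wedge C'$ carries the minimality condition $\nexists k < j : \Vec{w}_i^{k} \viol C \wedge C'$, whereas the blame rule only excludes earlier violations \emph{blamed on $1-p$} and a conflict at position $j-1$.

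The gap is in the argument you propose to close this. Your claimed dichotomy --- that an earlier informative violation of the opposite conjunct at some $k < j$ ``would either have to carry a blame label by $1-p$ \ldots or witness an inherent conflict at $\Vec{w}_i^{j-1}$'' --- is not justified, and the paper's own Proposition~\ref{violimpblame} is precisely the reason: informative violations need not be blamed on anyone. For instance, if $C'$ contains $\bot$ (informatively violated at $i$ with $\viol^p \bot$ defined as \textit{false} for both parties), or if $C'$ is itself a conjunction whose conjuncts conflict at some step $k-1 < j-1$, then $\Vec{w}_i^k \viol C \wedge C'$ holds for some $k < j$ while neither side condition of the blame rule is contradicted: there is no earlier $\viol^{1-p}$, and $\textit{conflict}(C, C', \Vec{w}_i^{j-1})$ is false (indeed it is false \emph{because} the conjunction was already violated earlier, so no one-step extension of $\Vec{w}_i^{j-1}$ informatively violates it). An earlier violation blamed on $p$ itself, rather than $1-p$, is likewise not excluded by the rule as written. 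So the minimality conjunct of $\viol C \wedge C'$ cannot be derived from the blame-side conditions alone, and your proof of the conjunction case does not close; at best the proposition holds only after strengthening the blame rule for $\wedge$ (e.g., excluding \emph{any} earlier violation of the conjunction) or restricting to $\bot$-free, conflict-free contracts. Note that the paper's own proof silently asserts the implication for this case as well, so you have gone further than the paper in diagnosing the problem --- but the repair you sketch does not work as stated.
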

\begin{proof}
	Note how each case of $\viol^p$ implies its counterpart in $\viol$.
\end{proof}

But the opposite is not true:

\begin{proposition}\label{violimpblame} 
	A contract may be violated but both parties be blameless:
	$\Vec{w}_i^j \viol C$  does not imply $\exists p \cdot \Vec{w}_i^j \viol^p C$.
\end{proposition}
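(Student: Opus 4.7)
The plan is to exhibit a counterexample, i.e.\ a contract $C$ and an interaction $\Vec{w}_i^j$ for which $\Vec{w}_i^j \viol C$ holds but $\Vec{w}_i^j \viol^p C$ holds for neither $p \in \{0,1\}$. The simplest candidate is the always-violated atomic contract $C = \perp$: by the informative semantics rule for norms, $\Vec{w}_i^j \viol N \mydef i = j \wedge \Vec{w}_i^j \not\sat N$, and since $\Vec{w}_i^j \sat \perp$ is defined as \emph{false}, we get $\Vec{w}_i^i \viol \perp$ for every $i$ and every interaction. On the blame side, the rule $\Vec{w}_i^j \viol^p \perp \mydef \textit{false}$ immediately gives that no party is ever blamed for $\perp$, which completes the argument.

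A slightly more informative counterexample uses conflicting norms, illustrating why the $\textit{conflict}(\cdot)$ side-condition in the blame rule for conjunction is needed. Take $C = O_0(a) \wedge F_0(a)$: at any single position $i$, either $a \in w_0[i] \cap w_1[i]$, in which case $F_0(a)$ is violated, or $a \notin w_0[i] \cap w_1[i]$, in which case $O_0(a)$ is violated, so $\Vec{w}_i^i \viol C$ always holds. However, since no extension can ever satisfy $C$, the predicate $\textit{conflict}(O_0(a), F_0(a), \Vec{w}_i^{i-1})$ holds, and hence the blame rule for conjunction rules out $\Vec{w}_i^i \viol^p C$ for both $p$.

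The argument is not really difficult; the only thing to be careful about is reading the definitions faithfully, in particular that $\perp$ qualifies as a norm in the grammar of \cdl and that the $\viol^p$ rule for $\perp$ is explicitly \emph{false} rather than being inherited from any generic ``$\viol^p N$'' schema. Either of the two counterexamples above suffices to discharge the proposition.
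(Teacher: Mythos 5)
Your proposal is correct and takes essentially the same route as the paper, whose (one-line) proof points at exactly the same two witnesses: the definitions on $\bot$ and conjunction in the presence of conflicts. You simply work out both counterexamples in detail, and the details check out against the semantics in Figures~\ref{fig:informative} and~\ref{fig:blame}.
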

\begin{proof}
	Consider their definitions on $\bot$, and given conjunction and the presence of conflicts.
\end{proof}

\begin{proposition}[Satisfaction implies no blame]\label{satimpnoblame} 
	Satisfaction of contract $C$ means that no party will get blamed:
	$\Vec{w}_i^j \sat C $ implies $\nexists p \cdot \Vec{w}_i^j \viol^p C$
\end{proposition}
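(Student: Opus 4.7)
The plan is to derive this as a short corollary of the two results immediately preceding it, rather than by a fresh structural induction on $C$. Specifically, I would combine Proposition~\ref{blameviol} (blame implies informative violation) with Lemma~\ref{lem:satimpliesnotviol} (informative satisfaction and informative violation are disjoint).

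Concretely, I would argue by contradiction. Suppose $\Vec{w}_i^j \sat C$ and, toward a contradiction, that there exists a party $p \in \{0,1\}$ such that $\Vec{w}_i^j \viol^p C$. By Proposition~\ref{blameviol}, the existence of a blamed party at index $j$ yields $\Vec{w}_i^j \viol C$. But Lemma~\ref{lem:satimpliesnotviol} asserts that there are no indices $j, k$ with $\Vec{w}_i^j \sat C$ and $\Vec{w}_i^k \viol C$; instantiating $k := j$, we get a direct contradiction with our assumption $\Vec{w}_i^j \sat C$. Hence no such $p$ exists, i.e. $\nexists p \cdot \Vec{w}_i^j \viol^p C$, as required.

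There is essentially no obstacle here, since the heavy lifting has already been done: Lemma~\ref{lem:satimpliesnotviol} handles the interaction between $\sat$ and $\viol$ across all connectives (including the subtle cases of $\langle re \rangle C$, $re \guard C$, conjunction and reparation), and Proposition~\ref{blameviol} handles the reduction from $\viol^p$ to $\viol$. The only thing worth double-checking is that Proposition~\ref{blameviol}'s one-line justification really covers every clause of Figure~\ref{fig:blame} — in particular the conjunction clause (whose extra side-conditions about $\textit{conflict}$ and earlier violations by $1-p$ only \emph{restrict} $\viol^p$ relative to $\viol$, so the implication still goes through) and the recursion clause (which unfolds identically in both semantics, so the implication is preserved by the unfolding argument already used via Lemma~\ref{lem:recfree}). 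Once that is noted, the proposition follows immediately without any case analysis on $C$.
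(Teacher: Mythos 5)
Your proof is correct and follows exactly the same route as the paper: a proof by contradiction combining Proposition~\ref{blameviol} (blame implies violation) with Lemma~\ref{lem:satimpliesnotviol} (satisfaction and violation are disjoint). Your additional remarks double-checking the conjunction and recursion clauses of Proposition~\ref{blameviol} are a sensible sanity check but not needed beyond what the paper already assumes.
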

\begin{proof}
	Assume the contrary, i.e. that $C$ is satisfied but party $p$ is blamed. By Proposition~\ref{blameviol} then there is a violation, but Lemma~\ref{lem:satimpliesnotviol} implies we cannot both have a satisfaction or violation.
\end{proof}

\begin{observation}%
	For any contract $C^{\slashed{\bot}}$ defined on \cdl free of $\bot$ and free of conflicts, the violation of a contract $C^{\slashed{\bot}}$ leads to blame. 
\end{observation}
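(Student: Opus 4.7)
The plan is to prove the observation by structural induction on $C^{\slashed{\bot}}$, showing that $\Vec{w}_i^j \viol C^{\slashed{\bot}}$ implies $\exists p : \Vec{w}_i^j \viol^p C^{\slashed{\bot}}$, where every sub-contract inherits the $\bot$-free and conflict-free assumptions. For the base cases, $\top$ is never violated so the statement is vacuous, and $\bot$ is excluded by hypothesis. For each norm I would read off Figure~\ref{fig:informative} and check that the disjunction of the two corresponding blame rules in Figure~\ref{fig:blame} (indexed by $p = q$ and $p = 1-q$) covers every way the norm can be violated: e.g.\ $\obl_q(a)$ is violated exactly when $a \notin w_q[i]$ (blame goes to $q$) or $a \in w_q[i]$ and $a \notin w_{1-q}[i]$ (blame goes to $1-q$); the cases for $\frb_q(a)$ and $\perm_q(a)$ are symmetric.

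For the inductive step I would dispatch the routine operators first. The trigger $\langle re\rangle C$, the guard $re\guard C$, sequence $C;C'$, and reparation $C\blacktriangleright C'$ all factor the violating interaction into a sub-interaction on which a sub-contract is informatively violated; the matching blame rule has exactly the same shape, so transporting the IH on the sub-contract directly yields a blaming party. For recursion $rec~X.C$ I would invoke Lemma~\ref{lem:recfree} to unfold into an equivalent recursion-free contract whose size is bounded in terms of $j - i$, preserving the $\bot$-free and conflict-free properties, and then appeal to structural induction on the unrolled form (or equivalently do a nested induction on $j - i$).

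The main obstacle is the conjunction case $C \wedge C'$, because the blame rule carries two side conditions beyond the violation of one conjunct: (i) no earlier $k < j$ with $\Vec{w}_i^k \viol^{1-p} C \wedge C'$, and (ii) $\lnot\textit{conflict}(C, C', \Vec{w}_i^{j-1})$. I would discharge (ii) directly from the conflict-free hypothesis on $C^{\slashed{\bot}}$. For (i), the key observation is that Proposition~\ref{blameviol} converts any such earlier blame into an earlier informative violation of $C \wedge C'$, which Lemma~\ref{lem:uniquesatunsat} forbids since $j$ is the unique violation point; hence no competing earlier blame exists. With both side conditions cleared, the IH applied to whichever of $C$ or $C'$ informatively violates at $j$ (at least one does, by the violation clause for conjunction) supplies the required party $p$, completing the induction.
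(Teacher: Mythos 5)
The paper states this only as an \emph{observation} and supplies no proof, so there is nothing to compare against directly; your structural induction is the natural argument and, as far as the rules actually printed in Figures~\ref{fig:informative} and~\ref{fig:blame} go, it is essentially correct. The coverage check for the norms is right (for $\obl_q(a)$ the two blame clauses, instantiated at $p=q$ and $p=1-q$, exhaust the ways $\sat$ can fail; $\frb_q$ and $\perm_q$ each have a single blamable case that coincides with their single violation case). The conflict-freeness hypothesis is exactly what discharges side condition~(ii) of the conjunction rule, and for side condition~(i) you can argue even more directly than you do: the $\viol$ clause for $C\wedge C'$ already asserts $\nexists k<j:\Vec{w}_i^k\viol C\wedge C'$, so Proposition~\ref{blameviol} alone rules out any earlier blame, with no need for Lemma~\ref{lem:uniquesatunsat}. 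Handling $rec~X.C$ via Lemma~\ref{lem:recfree} (so that the induction is really on the unrolled, recursion-free contract) matches how the paper treats recursion in its other proofs.

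One step deserves a flag. Figure~\ref{fig:blame} contains no rule at all for $\viol^p$ applied to $C;C'$, so your claim that sequence has ``a matching blame rule of exactly the same shape'' cannot be read off the paper: taken literally, the printed semantics never blames anyone for the violation of a sequence, and the observation would then fail already on $\obl_0(a);\obl_0(b)$. The rule you are implicitly assuming --- mirroring the $\viol$ clause for sequence, with $\viol^p$ on whichever of the two sub-contracts is violated --- is surely what the authors intend, and with it your inductive step goes through, but you should state the missing rule explicitly rather than assert its existence. A second, minor point: ``free of conflicts'' is never formally defined for a whole contract, so you should spell out the reading your conjunction case needs, namely that no conjunction sub-contract satisfies $\textit{conflict}$ on any relevant sub-interaction, and note that this property (like $\bot$-freeness) is preserved by the $\top$-substitution performed in Lemma~\ref{lem:recfree}.
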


\begin{observation}[Double blame]
	Double blame in \cdl for both parties $p$ and $1-p$ is possible. Consider $C= \obl_p(a) \wedge \obl_p(b)$. Violation of the left-hand side by $p$ and the violation of the right-hand side by $1-p$ can happen at the same time.
\end{observation}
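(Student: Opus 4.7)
The plan is to exhibit an explicit single-step interaction that simultaneously triggers both blame relations for $C = \obl_p(a) \wedge \obl_p(b)$. Since both conjuncts are atomic norms, the witness only needs to specify $w_p[0]$ and $w_{1-p}[0]$, and I would work throughout at $i = j = 0$.

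Concretely, take $w_p[0] = \{b\}$ and $w_{1-p}[0] = \{a\}$. The blame clause for $\obl_p(a)$ then blames party $p$ because $a \not\in w_p[0]$, while the same clause read with the party roles swapped blames party $1-p$ for $\obl_p(b)$ because $b \in w_p[0]$ and $b \not\in w_{1-p}[0]$. So at position $0$ each conjunct individually assigns blame to a different party, which is exactly the first disjunct of the conjunction blame rule for each of the two party indices.

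Next, I would discharge the two side conditions in the conjunction blame rule. The ``no earlier violation of the conjunction by the opposite party'' clause is vacuous because there is no index strictly smaller than $j = 0$. The non-conflict requirement asks that some extension of the empty prior interaction avoid violating $C$ on the next step; the alternative assignment $w_p[0] = w_{1-p}[0] = \{a,b\}$ informatively satisfies both atomic norms and hence $C$, showing that $\textit{conflict}(\obl_p(a),\obl_p(b),\Vec{w}_0^{-1})$ fails. Assembling these facts yields both $\Vec{w}_0^0 \viol^p C$ and $\Vec{w}_0^0 \viol^{1-p} C$, which is precisely the double blame claim.

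I expect the only mildly subtle step to be the conflict-freeness check: one has to unfold the definition of \textit{conflict} carefully to notice that exhibiting a single extension satisfying $C \wedge C'$ on the next step suffices to refute it, and that the prior interaction here is empty so any choice of $w_p[0], w_{1-p}[0]$ is a valid extension. Beyond that, the argument is a direct instantiation of the rules in Figure~\ref{fig:blame}.
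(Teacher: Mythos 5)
Your proof is correct and follows the same route as the paper, which simply asserts the claim for the very contract $C=\obl_p(a)\wedge\obl_p(b)$ without spelling out a witness; your interaction $w_p[0]=\{b\}$, $w_{1-p}[0]=\{a\}$ instantiates exactly the intended scenario, and your checks of the two side conditions of the conjunction blame rule (vacuity of the earlier-violation clause at $j=0$ and refutation of $\textit{conflict}$ via the extension $\{a,b\}$) are accurate. Nothing further is needed.
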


\vspace{-5pt}
\subsection{Quantitative Semantics}

\begin{figure}[htbp]\new{
		\[\begin{array}{l@{\quad}r@{\quad}l}
			\Vec{w}_i^j,\score \qsat N  & \mydef & \score = 0 \wedge \Vec{w}_i^j \sat N \\
			\Vec{w}_i^j,\score \qsatv N  &\mydef & 
			\left\{
			\begin{array}{lll}
				\score = 1 &\quad \text{if} & \Vec{w}_i^i \viol^{p} N \\
				\score = 0 &\quad \text{if} & N = \perp \vee \Vec{w}_i^i \viol^{1-p} N\\
			\end{array}
			\right.\\
			\Vec{w}_i^j,\score \qsatun N  & \mydef & \textit{false}\\
			\Vec{w}_i^j,\score  \qsat \langle re \rangle C &  \mydef &
			\left\{
			\begin{array}{lll}
				\score  =0 &\quad \text{if} & \Vec{w}_i^j \in \overline{cl}(re) \\
				\Vec{w}_{k+1}^j,\score \qsat C &\quad \text{if} &  \exists k:\Vec{w}_i^k \in TL(re) \\
			\end{array}
			\right.\\
			\Vec{w}_i^j, \score \qsatv \langle re \rangle C     &
			\mydef&
			\exists k : \Vec{w}_i^k \in TL(re)	\text{ and } \Vec{w}_{k+1}^j,\score \qsatv C   \\ 
			\Vec{w}_i^j,\score \qsatun \langle re \rangle C     & \mydef&
			\left\{
			\begin{array}{lcl}
				\score=0 & \text{if} & \nexists k \leq j : \Vec{w}_i^k \in   
				TL(re) \\
				\Vec{w}_{k+1}^j,\score \qsatun C &\quad \text{else} & \exists k \leq j : \Vec{w}_i^k \in   
				TL(re)\\
			\end{array}
			\right.\\
			\Vec{w}_i^j,\score \qsat  re \guard C   & \mydef & 
			\left\{
			\begin{array}{lll}
				\Vec{w}_{i}^{j-1},\score \qsatun  C & \text{if} & \Vec{w}_i^j \in \overline{cl}(re) \cup TL(re)\\
				& & \textit{ and } \nexists \score', k < j: \Vec{w}_{i}^{k},\score' \qsatv  C\\
				\Vec{w}_i^j,\score \qsat C & \text{if } &  \Vec{w}_i^j \in {cl}(re) \\
			\end{array}
			\right.\\
			\Vec{w}_i^j,\score \qsatv  re \guard C   & \mydef &      \Vec{w}_i^j \in {cl}(re) \text{ and }   \Vec{w}_i^j,\score \qsatv C\\
			\Vec{w}_i^j,\score \qsatun  re \guard C   & \mydef& \Vec{w}_i^j \in {cl}(re) \text{ and }  \Vec{w}_i^j,\score \qsatun C \\
			\Vec{w}_i^j,(\score_1 +\score_2) \qsat C \wedge C'
			& \mydef &
			\exists k,l : \Vec{w}_i^k,\score_1 \qsat C \text{ and } \Vec{w}_i^l,{\score_2} \qsat C'  \\
			& &\text{ and } j = max(k,l)  \\
			\Vec{w}_i^j,(\score_1 + \score_2) \qsatv C \wedge C'  & \mydef &
			((\Vec{w}_i^j,\score_2 \qsatv C' \text{ or } \Vec{w}_i^j,\score_2 \qsat C')\\
			&&\qquad \text{ and } \Vec{w}_i^j,\score_1 \qsatv C )\\
			&&\text{ or }  ((\Vec{w}_i^j,\score_1 \qsatv C' \text{ or } \Vec{w}_i^j,\score_1 \qsat C')\\
			&&\qquad \text{ and } \Vec{w}_i^j,\score_2 \qsatv C')\\
			&&\text{ or }  (\Vec{w}_i^j,\score_1 \qsatv C  \text{ and } \Vec{w}_i^j,\score_2 \qsatv C'\\
			&&\qquad \text{ and } \neg \textit{conflict}(C, C', \Vec{w}_i^{j-1}))\\
			&&\text{ or }  (\Vec{w}_i^{j-1},\score_1 \qsatun C  \text{ and } \Vec{w}_i^{j-1},\score_2 \qsatun C'\\
			&&\qquad  \text{ and } \textit{conflict}(C, C', \Vec{w}_i^{j-1}))
			\\
			\Vec{w}_i^j,(\score_1 + \score_2) \qsatun C \wedge C' & \mydef & 
			(\Vec{w}_i^j,\score_1 \qsat C  \text{ and } \Vec{w}_i^j,\score_2 \qsatun C')\\
			&& \text{ or }  (\Vec{w}_i^j,\score_1 \qsatun C  \text{ and } \Vec{w}_i^j,\score_2 \qsat C')
			\\ 
			\Vec{w}_i^j,(\score_1 + \score_2) \qsat C;C' &\mydef & \exists k < j: \Vec{w}_i^k,\score_1 \qsat C \text{ and } \Vec{w}_{k+1}^j,{\score_2} \qsat C'\\
			\Vec{w}_i^j,\score \qsatv C ; C'  & \mydef& 
			\left\{
			\begin{array}{lll}
				\score = \score_{1} + \score_{2} & \text{if} & \exists k < j : \Vec{w}_i^k,\score_{1} \qsat C\\
				&& \text{ and } \Vec{w}_{k+1}^j,\score_{2} \qsatv C'\\
				\Vec{w}_i^j,\score \qsatv  &\text{else} & \\
			\end{array} \right. \\
			\Vec{w}_i^j,(\score_1 + \score_2) \qsatun C~;~C'  & \mydef &
			\Vec{w}_i^j,\score \qsatun C \\
			&&\quad\text{ or } (\Vec{w}_i^k,\score_1 \qsat C \text{ and } \Vec{w}_{k+1}^j,{\score_2} \qsatun C') \\
			\Vec{w}_i^j,\score \qsat C \blacktriangleright C'  &\mydef&
			\Vec{w}_i^j,\score \qsat C\\
			&& \text{ or } (\exists k < j : \Vec{w}_i^k,\score_1 \qsatv C\\ 
			&& \quad \text{ and } \Vec{w}_{k+1}^j,\score_2 \qsat C \wedge \score = \score_1 + \score_2)\\
			
			\Vec{w}_i^j,(\score_1 + \score_2) \qsatv C \blacktriangleright C'  & \mydef &  (\exists k < j:   \Vec{w}_i^k,{\score_{1}} \viol^p C \text{ and } 
			\Vec{w}_{k+1}^j,{\score_2} \qsatv C') \\
			\Vec{w}_i^j,\score \qsatun C \blacktriangleright C'  &\mydef&  
			\left\{
			\begin{array}{lll}
				\score = \score_1  + \score_2 & \text{if} &\Vec{w}_i^k,\score_1 \qsatv C\\
				&& \text{ and }\Vec{w}_{k+1}^j,{\score_2} \qsatun C'\\
				\Vec{w}_i^j,\score \qsatun C & \text{else}
			\end{array} \right. \\
			\Vec{w}_i^j,\score \sttstile{}{}_{\gamma}^p rec~X.C &\mydef& \Vec{w}_i^j,\score \sttstile{}{}_{\gamma}^p C[X \setminus rec~X.C] \qquad \text{for } \gamma \in \{s, v, ?\}
		\end{array} 
		\]}
		
	\caption{Quantitative semantics rules over a finite interaction $\Vec{w}_i^j$.}
	\label{fig:quant}
\end{figure}

While it is possible to assign blame to one party for violating a contract, other qualitative metrics can provide additional information about the violation. These metrics can determine the number of violations caused by each party, as well as the level of satisfaction with the contract. To assess responsibility for contract violations, we introduce the notion of a mistake score, $\score$, for each party, enabling us to calculate a \textit{responsibility degree}. It is important to note that our language permits reparations, whereby violations can be corrected in the next time step. However, interactions that are satisfied with reparations are not considered ideal. We present quantitative semantics to compare satisfying interactions based on the number of repaired violations a party incurs. We define relations that track the number of repaired violations attributed to each party with a mistake score, $\score$, written $\qsat$ for informative satisfaction and $\qsatv$ for informative violation of the contract. We can also keep track of the number of violations when the trace is not informative through $\qsatun$. 
Figure~\ref{fig:quant} provides a definition of this semantics. \new{Note this definition intersects the previous semantic definitions, and due to space constraints, we do not re-expand that further. The addition is that we are disambiguating some cases to identify when to add to the score to identify a violation caused by $p$. For example, see the definition of $\qsatv$ for a norm.}

\begin{example}
	Consider again Example~\ref{ex:robots}, and consider the finite interaction
	$(\langle \{\textit{charge0}\}, \{\textit{detectProd}\}, \{\}, \{\textit{lift}\}\rangle $ and $ \langle \{\textit{charge0}\}, \{\}, \{\textit{lift}\}, \{\textit{lift}\}\rangle$. Note how this
	will lead to robot 0 being given a score of one since on the third step there is a violation that is repaired subsequently.
\end{example}

\begin{lemmarep}[Soundness and completeness]
	The quantitative semantics is sound and complete with regard to the informative semantics:
	$\Vec{w}_i^j \models_{\gamma} C \iff \exists \score_1, \score_2 : \Vec{w}_i^j,{\score_1} \sttstile{}{}_{v}^p C \text{ and } \Vec{w}_i^j,{\score_2} \sttstile{}{}_{v}^{1-p}$ with $\gamma \in \{s, v, ?\}.$
\end{lemmarep}
\begin{proofsketch}
	By induction on the quantitative semantics and informative semantics.
\end{proofsketch}
\begin{appendixproof}
	We prove this by structural induction on C.
	\paragraph{Base Case}: For atomic contracts the result follows easily.
	\paragraph{Inductive Hypothesis}: We assume the lemma holds for contracts $C$ and $C'$, for any $i$ and $j$.
	\paragraph{Inductive Step}: We prove the result holds for contracts built over $C$ and $C'$:
	\begin{itemize}
		\item $\langle re \rangle C$ --  
		\begin{enumerate}
			\item For satisfaction, note the conditions in the quantitative semantics cover all the cases in the informative satisfaction semantics, thus by the inductive hypothesis the result immediately follows.
			\item The argument for violation follows similarly.
			\item For $?$, consider that the cases imply either that the finite interaction is inside the closure (thus there is no satisfaction and no violation), or that a prefix is in the tight language, and the status of $C$ is unknown. The result follows by applying the inductive hypothesis on the latter.
		\end{enumerate}
		\item  $re \guard C$ --
		\begin{enumerate}
			\item For satisfaction, note how the first condition ensures we are in the complement closure or tight language, and that the contract has not been previously violated, as in the informative semantics, while applying the inductive hypothesis on the return hypothesis ensures the contract was not satisfied in the previous step in both the quantitative and informative, as required, but gives a value for $\score$. For the second condition, this also matches the respective disjunct in the informative semantics, and we apply the inductive hypothesis on the result to get what is needed.
			\item For violation, again the definition mirrors the informative semantics and we can just apply the inductive hypothesis.
			\item For $?$ we can apply the inductive hypothesis, and discern that the requirement that the finite interaction is in the closure of $re$ follows from the informative definition of $?$.
		\end{enumerate}
		\item $C \wedge C'$ --
		\begin{enumerate}
			\item For satisfaction, the definitions mirror each other, and applying the inductive hypothesis gives us what we need.
			\item For violation, in the quantitative semantics we enumerate every possible way to violate the conjunction. This corresponds to the informative definition, however, we present it differently, by checking that no prefix also violates the contract. The two approaches are equivalent. By applying the inductive hypothesis, and comparing each resulting disjunct from the quantitative definition to the informative definition we get the result we need. Note the conditions cover all possible combinations leading to a violation. For example, for the first case $(\Vec{w}_i^j,\score_1 \qsatv C  \text{ and } (\Vec{w}_i^j,\score_2 \qsatv C' \text{ or } \Vec{w}_i^j,\score_2 \qsat C'))$ iff, by the inductive hypothesis, $(\Vec{w}_i^j \viol C \text{ and } (\Vec{w}_i^j \vDash_? C' \text{ or } \Vec{w}_i^j \sat C')$, which in turn implies the informative definition. 
			
			Here we also deal with conflicts for the purpose of incrementing the score in case of a conflict, this does not affect the correctness of our argument.
			
			\item For $?$, note how applying the inductive hypothesis, the results cover the exact cases where the determination remains unknown.
		\end{enumerate}
		\item $C ; C'$
		\begin{enumerate}
			\item For satisfaction and violation, the quantitative definition mirrors the informative definition, and the result then immediately follows from the inductive hypothesis.
			\item For $?$, the quantitative definition can be easily seen to cover all cases where there is no violation in the informative definition, the result follows from the inductive hypothesis. 
		\end{enumerate}
		\item  $C \blacktriangleright C'$ --
		\begin{enumerate}
			\item For satisfaction and violation, the quantitative definition mirrors the informative definition, and the result then immediately follows from the inductive hypothesis.
			\item For $?$, the quantitative definition can be easily seen to cover all cases where there is no violation in the informative definition, the result follows from the inductive hypothesis. 
		\end{enumerate}
		\item $ rec X. C$
		\begin{enumerate}
			\item By Lemma~\ref{lem:recfree} and the inductive hypothesis the result follows.
		\end{enumerate}
	\end{itemize}
\end{appendixproof}

\begin{lemmarep}[Fairness of the Quantitative semantics]
	\new{The quantitative semantic is fair, meaning that if the score of a player $p$ is $\score$ then $p$ is to be blamed for non-fulfilling $\score$ norms of the contract:
		$\Vec{w}_i^{j},\score \sttstile{}{}_{\gamma}^p C \implies \exists N_1 \dots N_\score \in \textit{subcontracts}(C) : \Vec{w}_{i}^j \sttstile{}{}_{\gamma}^p N_i$ with $\gamma \in \{s, v, ?\}$, where $\textit{subcontracts}(C)$ is a multiset containing the subcontracts of $C$, up to how often they appear.}
\end{lemmarep}
\begin{proofsketch}
	We prove, this by structural induction, noting that the score only increases when $p$ is blamed for the violation of a norm, while the inductive case easily follows from the inductive hypothesis.
\end{proofsketch}
\begin{appendixproof}
	We prove this by structural induction on C.
	\paragraph{Base Case}: By analysis of the definition, one can easily see that $\score$ is set to $1$ only when party $p$ is to blame for the violation of a norm.
	\paragraph{Inductive Hypothesis (IH)}: We assume the lemma holds for contracts $C$ and $C'$, for any $i$, $j$, and $k$.
	\paragraph{Inductive Step}: We prove the result holds for contracts built over $C$ and $C'$:
	\begin{itemize}
		\item In all cases except, $\score$ is not modified but its value depends on the evaluation of sub-contracts. This allows us to dispatch all the proof obligations to the inductive hypothesis.
	\end{itemize}
\end{appendixproof}

\vspace{-5pt}

\section{Analysis}\label{sec:analysis}

In this section, we define an automata-theoretic approach to analyzing \cdl contracts, through a construction to a safety automaton. We use this for model checking and blame analysis, \new{but leave the application for quantitative analysis for future work}.

\subsection{Contracts to Automata}

We give a construction from \cdl contracts to automata that recognize interactions that are informative for satisfaction or violation. For brevity, we keep the definition of the automata symbolic, with transitions tagged by propositions over party actions, representing a set of concrete transitions. The automaton is over the alphabet $\Sigma_{0,1}$ since it requires information about the parties.

\new{
	\begin{definition}\label{def:transm}
		\allowdisplaybreaks
		The deterministic \emph{automaton of contract} $C$ is: 
		\[aut(C) \mydef \langle\Sigma _{0,1}, S, s_0, \{s_B\}, \rightarrow\rangle\text{.}\] We define $\rightarrow$ through the below function $\trans(C, s_0, s_G, s_B, \{\})$ that computes a set of transitions, given
		a contract, an initial state ($s_0$), a state denoting satisfaction ($s_G$), a state denoting violation ($s_B$), and a partial function $V$ from recursion variables ($\mathbb{X}$) to states, characterised by (with $s$ as a fresh state):
		\begin{align*}
			\trans(\top, s_0, s_G, s_B, V) &\mydef \{s_0 \xrightarrow{true} s_G\}\\
			\trans(\perp, s_0, s_G, s_B, V) &\mydef \{s_0 \xrightarrow{true} s_B\}\\
			\trans(\obl_p(a), s_0, s_G, s_B, V) &\mydef \{s_0 \xrightarrow{a_p \wedge a_{1-p}} 
			s_G,  s_0 \xrightarrow{\neg(a_p \wedge a_{1-p})} s_B\}\\
			\trans(\frb_p(a), s_0, s_G, s_B, V)  &\mydef \{s_0 \xrightarrow{\neg(a_p \wedge a_{1-p})} s_G, s_0  \xrightarrow{a_p \wedge a_{1-p}} s_B\}\\
			\trans(\perm_p(a), s_0, s_G, s_B, V)  &\mydef \{s_0 \xrightarrow{a_{p} \implies a_{1-p}} s_G, s_0 \xrightarrow{\neg(a_{p} \implies a_{1-p})} s_B\}\\
			\trans(\langle re \rangle C, s_0, s_G, s_B, V)  &\mydef A(re, s_0, s, s_G) \cup \trans(C, s, s_G, s_B, V) \\
			\trans(re \guard C, s_0, s_G, s_B, V)  &\mydef (A(re, s_0, s_G, s_G) \| \trans(C, s_0, s_G, s_B, V))\\
			&\quad\ [(s_G, *)/s_G][(*, s_B)/s_B][(*,s_G)/s_G]\\
			\trans(C \wedge C', s_0, s_G, s_B, V)  &\mydef (\trans(C,  s_0, s_G, s_B, V) \|^{r} \trans(C', s_0, s_G, s_B, V))\\
			&\quad\ [(s_G,s_{G})/s_G][(s_B, *)/s_B][(*, s_B)/s_B]\\
			\trans(C;C', s_0, s_G, s_B, V)  &\mydef \trans(C, s_0, s, s_B, V) \cup \trans(C', s, s_G, s_B, V)\\
			\trans(C \blacktriangleright C', s_0, s_G, s_B, V)  &\mydef \trans(C, s_0, s_G, s, V) 
			\cup \trans(C', s, s_G, s_B, V) \\
			\trans(X, s_0, s_G, s_B, V)  &\mydef \{s_G \xrightarrow{\epsilon} V(X)\}\\
			\trans(rec X.C, s_0, s_G, s_B, V)  &\mydef \trans(C, s_0, s_G, s_B, V[X \mapsto s_0])
		\end{align*}
		
		We define $\rightarrow'$ as $\trans(C, s_0, s_G, s_B, \{\})$ without all transitions outgoing from $s_G$ and $S_B$, and define $\rightarrow \mydef \rightarrow' \cup \{s_B \xrightarrow{\textit{true}} s_B\} \cup \{s_G \xrightarrow{true} s_G\}$, where $S$ is the set of states used in $\rightarrow$. We assume the $\epsilon$-transitions are removed using standard methods.
\end{definition}}

\new{
	We give some intuition for the construction. The transitions for the atomic contracts follow quite clearly from their semantics. For the trigger contracts, we use a fresh state $s$ to connect the automaton for the regular expression, with that of the contract, ensuring the latter is only entered when the former tightly matches. For the guard contract, we instead synchronously compose ($\|$) both automata (i.e., intersect their languages), getting a set of transitions. Here we also relabel tuples of states to single states. Recall we use $(*,s)$ to match any pair, where the second term is $s$, and similarly for $(s, *)$. Through the sequence of re-labellings, we ensure: first that reaching $s_G$ in the acceptance of the first means; (2) reaching $s_B$ in the second means violation; and (3) if the previous two situations are not the case, reaching $s_G$ in the second means acceptance. 
	
	For conjunction, instead of using the synchronous product, we use the relaxed variant ($\|^r$), since the contracts may require traces of different lengths for satisfaction. This relaxed product allows the `longer' contract to continue after the status of the other is determined. For sequence, we use the fresh state $s$ to move between the automata, once the first contract has been satisfied. For reparation this is similar, except we move between the contracts at the moment the first is violated. For recursion, we simply loop back to the initial state of the recursed contract with an $\epsilon$-transition once the corresponding recursion variable is encountered.
	
}

Note how analyzing states without viable transitions, after applying $\trans$,
can be used for \textit{conflict analysis} of \cdl contracts. For example, when there is a conflict, e.g., $O_p(a) \wedge F_p(a)$, there will be a state with all outgoing transitions to $s_B$.

\begin{theoremrep}[Correctness]\label{thm:autcorrect} An infinite interaction is a model of $C$, iff it never reaches a rejecting state in $aut(C)$: \\$\forall \Vec{w}_0^\infty \cdot \Vec{w}_0^\infty \models C \iff w_0 \sqcup^0_1 w_1 \in L(aut(C))$.
\end{theoremrep}

\begin{proofsketch}
	For the atomic contracts, the correspondence should be clear. By structural induction on the rest: triggering, sequence, and reparation should also be clear from the definition. For conjunction, the relaxed synchronous product makes sure the contract not yet satisfied continues being executed, as required, while the replacements ensure large nestings of conjunctions do not lead to large tuples of accepting or rejecting states. For $\guard$, using the synchronous product ensures the path ends when either is satisfied/violated, as required. 
\end{proofsketch}

\begin{appendixproof}
	First, some observations. Looking at the left-hand side, $w_0 \sqcup^0_1 w_1 \in L(aut(C))$ means that no finite prefix of $w_0 \sqcup^0_1 w_1$ reaches a rejecting state. 
	
	Looking at the right-hand side, $\Vec{w}_0^\infty \models C$ means that there is no finite prefix of $\Vec{w}_0^\infty$ that informatively violates: $\nexists k \in \mathbb{N} \cdot \Vec{w}_0^k \vDash_{\perp} C$. Note how $\Vec{w}_0^k$ has a direct correspondence to $(w_0 \sqcup^0_1 w_1)[0...k]$ (recall $\Vec{w}_0^k$ is $(w_0[0...k], w_1[0...k])$, and $(w_0 \sqcup^0_1 w_1)[0...k] = w_0[0...k] \sqcup^0_1 w_1[0...k]$). We use this direct correspondence implicitly throughout.
	
	We now start the proof, relying on structural induction on $C$.
	\paragraph{Base Case}
	\begin{enumerate}
		
		\item The result should be clear for $\top$ and $\perp$.
		
		\item $\obl_p(a)$:
		\begin{enumerate}
			\item LHS: $\Vec{w}_0^\infty \models \obl_p(a) \bydef{def~\ref{def:infinitesat}} \nexists k. \Vec{w}_0^k \viol \obl_p(a) \bydef{fig.\ref{fig:informative}} \nexists k \cdot 0=k \text{ and } a \notin w_p[0] \text{ or } a \notin w_{1-p}[0]$;
			\item RHS: In $aut(\obl_p(a))$, the only way to reach a rejecting state is to take the transition $s_0 \xrightarrow{\neg(a_p \wedge a_{1-p})} s_B$. Taking this transition implies $\neg(a_p \in w_p[0] \wedge a_{1-p} \in w_{1-p}[0])$, which contradicts the LHS.
		\end{enumerate}
		
		\item $\frb(a)$:
		\begin{enumerate}
			\item LHS: $\Vec{w}_0^\infty \models \frb_p(a) \bydef{def~\ref{def:infinitesat}} \nexists k. \Vec{w}_0^k \viol \frb_p(a) \bydef{} \nexists k \cdot 0=k \text{ and } a \in w_p[0] \text{ or } a \in w_{1-p}[0]$;
			\item RHS: In $aut(\frb_p(a))$, the only way to reach a rejecting state is to take the transition $s_0 \xrightarrow{a_p \wedge a_{1-p}} s_B$. Taking this transition implies $a_p \in w_p[0] \wedge a_{1-p} \in w_{1-p}[0]$ which contradicts the LHS.
		\end{enumerate}
		
		\item $\perm_p(a)$:
		\begin{enumerate}
			\item LHS: $ \Vec{w}_0^\infty \models \perm_p(a) \bydef{def~\ref{def:infinitesat}} \nexists k. \Vec{w}_0^k \viol \perm(a) \bydef{\ref{fig:informative}} \nexists k \cdot 0=k \text{ and } a \in w_p[0] \text{ and } a \notin w_p[0]$;
			\item RHS: In $aut(\perm_p(a)))$ the only way to reach the rejecting state is to take the transition $s_0 \xrightarrow{\neg(a_{p} \implies a_{1-p})} s_B$. Taking this transition implies $\neg(a_p \in w_p[0] \implies a_{1-p} \in w_{1-p}[0])$ which contradicts the LHS.
		\end{enumerate}
	\end{enumerate}
	
	\paragraph{Inductive Hypothesis} We assume the theorem for contracts $C$ and $C'$.
	\paragraph{Inductive Step} We prove it now for contracts built over $C$ and $C'$:
	
	We present the proofs from left-to-right, they also apply from right-to-left, unless explicitly stated.
	\begin{itemize}
		\item $\langle re \rangle C$ --
		\begin{enumerate}
			\item Starting from the LHS, all finite prefixes are of two types:
			\begin{enumerate}
				\item The prefix is not in the tight language of $re$, ensuring that no finite prefix of it reaches $s$ in $A(re, s_0, s, s_G)$, thus the automaton of $C$ is never entered, and $s_B$ is unreachable.
				\item The prefix tightly matches $re$, say $\Vec{w}_0^k$, and its corresponding finite suffix violates $C$, say $\Vec{w}_{k+1}^j$. We can apply the IH on $\Vec{w}_{k + 1}^\infty \vdash C$, concluding that the automaton of $C$ accepts $\Vec{w}_{k + 1}^\infty$. Moreover, by $A(re, s_0, s, s_G)$, $\Vec{w}_0^k$ reaches $s$, the initial state of the automaton for $C$, ensuring $\Vec{w}_0^j$ reaches $s_G$, a sink state, and never $s_B$, such that then we can conclude the RHS.
			\end{enumerate}
		\end{enumerate} 
		
		\item $re \guard C$ --
		\begin{enumerate}
			\item Starting from the LHS, all finite prefixes are of two types:
			\begin{enumerate}
				\item The prefix $\Vec{w}_0^k$ is not in the closure of the language. It, or a prefix of it, thus reaches $s_G$ in $A(re, s_0, s_G, s_G)$. On the RHS, we compose this automaton with the automaton for $C$. Note $s_G$ has no outgoing transitions in $A(re, s_0, s_G, s_G)$, therefore states of the form $(s_G, *)$ have no outgoing transitions. Similarly for states of the form $(*, s_B)$. Given the relabelling, we ensure reaching $s_G$ in $re$'s automaton means acceptance, giving us the result we need.
				\item There is a prefix in the closure of $re$ and this prefix does not violate $C$. Any infinite extension is thus a model of $C$, including $\Vec{w}_0^\infty$, thus this interaction is a model of $C$. By the inductive hypothesis, it is also in the language of $aut(C)$. Moreover, the definition of the synchronous composition ensures this interaction is also accepted by the composed automaton, giving us the result we need.
			\end{enumerate}
		\end{enumerate}
		
		\item $C \wedge C'$ --- 
		\begin{enumerate}
			\item Starting from the LHS:
			\begin{enumerate}
				\item There is no prefix that informatively violates $C$, and no prefix that informatively violates $C'$. By the inductive hypothesis then the interaction must be in the automaton of each. For the conjunction, we computed the relaxed synchronous product of the two automata that computes the union of two languages. While relabelling ensures violating in one automaton means violating in the composed one. This gives us the result needed.
			\end{enumerate} 
		\end{enumerate}
		
		\item $C ; C'$ --
		\begin{enumerate}
			\item Starting from the LHS,  every finite prefix does not informatively violate $C$ (0) and moreover can be of two types:
			\begin{enumerate}
				\item It has no strict prefix that satisfies $C$. Applying the inductive hypothesis (given (0)) ensures any infinite continuation of such a prefix remains in the language of the automaton of $C$.
				\item It has a strict prefix that satisfies $C$ and the corresponding finite suffix does not violate $C'$. Applying the inductive hypothesis, we can conclude that the latter prefix reaches $s$, and then the suffix continues executing in the automaton of $C'$, where it never reaches $s_B$ (given the inductive hypothesis on $C'$).
			\end{enumerate}
		\end{enumerate}
		
		\item $C \blacktriangleright C'$ --
		\begin{enumerate}
			\item Starting from the LHS, every finite prefix is of two types:
			\begin{enumerate}
				\item The prefix does not informatively violate $C$. By applying the inductive hypothesis directly the result follows.
				\item The prefix informatively violates $C$ and it has no corresponding finite suffix that violates $C'$. Applying the inductive hypothesis on both, the prefix will reach the intermediate bad state ($s$) in $C$'s automaton, and thus enter $C'$'s automaton, where it never reaches the bad state $s_B$, as required.
			\end{enumerate}
		\end{enumerate} 
		
		\item $rec X.C$ --
		\begin{enumerate}
			\item The correctness of this depends on the correctness of $C$, note how the automaton loops back to the initial state once a recursion variable is reached.
		\end{enumerate}
		\qed
	\end{itemize}
\end{appendixproof}

\begin{corollary}\label{thm:autbad} An infinite interaction is not a model of $C$, if and only if it reaches a rejecting state in $aut(C)$: $\forall (w_0,w_1) \not\models C \iff \exists j \in \mathbb{N} \cdot s_0 \xRightarrow{(w_0 \sqcup^0_1 w_1)[0...j]} s_B$.
\end{corollary}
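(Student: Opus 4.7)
The plan is to derive this corollary as an immediate consequence of Theorem~\ref{thm:autcorrect} by contraposition. That theorem gives the equivalence $\Vec{w}_0^\infty \models C \iff w_0 \sqcup^0_1 w_1 \in L(aut(C))$. Negating both sides yields $\Vec{w}_0^\infty \not\models C \iff w_0 \sqcup^0_1 w_1 \notin L(aut(C))$, so the only remaining work is to rewrite the right-hand side into the form $\exists j \in \mathbb{N} \cdot s_0 \xRightarrow{(w_0 \sqcup^0_1 w_1)[0...j]} s_B$.

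To do that, I would unfold the definition of $L(A)$ from the Preliminaries, namely that $L(A)$ is the set of infinite traces with \emph{no} prefix reaching a rejecting state. Negating this existential-free characterisation gives exactly the existence of a finite prefix reaching some rejecting state. By Definition~\ref{def:transm}, the set of rejecting states of $aut(C)$ is the singleton $\{s_B\}$, so ``reaching a rejecting state'' is equivalent to reaching $s_B$. Furthermore, $aut(C)$ is deterministic (up to the removal of $\epsilon$-transitions noted at the end of the construction), so the unique run on the prefix $(w_0 \sqcup^0_1 w_1)[0...j]$ is captured by the relation $s_0 \xRightarrow{(w_0 \sqcup^0_1 w_1)[0...j]} s_B$, which is precisely the right-hand side of the corollary.

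The backward direction is symmetric: if some $j$ exists with $s_0 \xRightarrow{(w_0 \sqcup^0_1 w_1)[0...j]} s_B$, then the trace has a prefix reaching a rejecting state, so it lies outside $L(aut(C))$ and, by Theorem~\ref{thm:autcorrect}, is not a model of $C$. There is no real obstacle here; the entire argument is contraposition plus a definitional unfolding, since the substantive content — the correspondence between the informative semantics and the automaton construction — has already been discharged by Theorem~\ref{thm:autcorrect}.
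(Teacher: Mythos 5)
Your proof is correct and matches the paper's own argument, which likewise derives the corollary by contraposition from Theorem~\ref{thm:autcorrect} together with the fact that $aut(C)$ is complete up to rejection (every trace outside $L(aut(C))$ has a finite prefix whose unique run reaches the sole rejecting state $s_B$). Your version just spells out the definitional unfolding that the paper leaves implicit.
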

\begin{proofsketch}
	Follows from Theorem.~\ref{thm:autcorrect} and completeness (up to rejection) of $aut(C)$.
\end{proofsketch}
\textbf{Complexity} From the translation note that without regular expressions the number of states and transitions is linear in the number of sub-clauses and operators in the contract, but is exponential in the presence of regular expressions.\footnote{For example, a contract $rec X. \top;(O_0(a) \wedge P_1(b));X$ has size 8 (note normed actions are not counted).}

\subsection{Model Checking}

\new{
	We represent the behaviour of each party as a Moore machine ($M_0$, and $M_1$). For party 0, the input alphabet is $\Sigma_1$ and the output alphabet is $\Sigma_0$, and vice-versa for party 1. We characterise the composed behaviour of two parties by defining the product of the two dual Moore machines: $M_0 \otimes M_1$, getting an automaton over $\Sigma_0 \cup \Sigma_1$.
	
	We can then compose this automaton that represents the interactive behaviour of the parties with the contract's automaton, $(M_0 \otimes M_1) \| aut(C)$. Then, if no rejecting state is reachable in this automaton, the composed party's behaviour respects the contract.
	
	\begin{theorem}[Model Checking Soundness and Completeness]\label{thm:mc}
		$\emptyset = RL((M_0 \otimes M_1) \| aut(C))$ iff $\nexists \Vec{w}_0^\infty : w_0 \sqcup^0_1 w_1 \in L(M_0 \otimes M_1) \wedge \Vec{w}_0^\infty \viol C$.
	\end{theorem}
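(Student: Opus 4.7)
The plan is to unfold both sides of the biconditional down to a statement about reachability in the composed automaton, and then invoke the correctness of $aut(C)$ (Corollary~\ref{thm:autbad}) together with the observation that $M_0 \otimes M_1$ contributes no rejecting states of its own.

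First, I would recall that by Definition of the Moore-machine product, $M_0 \otimes M_1$ is defined with rejecting set $\emptyset$. Hence, in the synchronous product $(M_0 \otimes M_1) \| aut(C)$, the rejecting states are exactly those of the form $(s, s_B)$ where $s_B$ is the unique rejecting state of $aut(C)$. Consequently, $RL((M_0 \otimes M_1) \| aut(C)) = \emptyset$ iff no infinite trace is jointly accepted by $M_0 \otimes M_1$ while being rejected by $aut(C)$, i.e., iff $L(M_0 \otimes M_1) \cap RL(aut(C)) = \emptyset$.

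Next, I would translate the right-hand side into the same intersection condition. A trace $w_0 \sqcup^0_1 w_1$ belongs to $L(M_0 \otimes M_1)$ precisely when the pair $(w_0, w_1)$ is a joint execution of the two Moore machines, i.e., an interaction that both parties can produce together. By Corollary~\ref{thm:autbad}, $\Vec{w}_0^\infty \viol C$ holds iff some finite prefix of $w_0 \sqcup^0_1 w_1$ reaches $s_B$ in $aut(C)$, i.e., iff $w_0 \sqcup^0_1 w_1 \in RL(aut(C))$. Thus the non-existence of such an interaction is equivalent to $L(M_0 \otimes M_1) \cap RL(aut(C)) = \emptyset$, which matches what we extracted from the left-hand side.

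The only subtlety, and the main obstacle I anticipate, is the bookkeeping between the two views of an interaction: the pair-of-traces view $\Vec{w}_0^\infty = (w_0, w_1)$ used in the semantic relation $\viol$, and the single-trace view $w_0 \sqcup^0_1 w_1$ over $\Sigma_{0,1}$ used by both $M_0 \otimes M_1$ and $aut(C)$. I would discharge this by explicitly invoking the bijection $\Vec{w}_0^j \leftrightarrow (w_0 \sqcup^0_1 w_1)[0..j]$ noted in the preliminaries, so that the letter-by-letter runs of the two automata coincide on the same object. Once this correspondence is made explicit, the two reductions above combine to yield the biconditional, and no further case analysis on the structure of $C$ is needed, since all the structural content is already encapsulated in Theorem~\ref{thm:autcorrect} and its corollary.
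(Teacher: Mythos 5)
Your proposal is correct and follows essentially the same route as the paper's own proof: both reduce the claim to the fact that $\|$ computes a language intersection and that $aut(C)$'s rejecting state characterises informative violation (Theorem~\ref{thm:autcorrect} / Corollary~\ref{thm:autbad}), modulo the bookkeeping between the pair-of-traces and $\sqcup^0_1$ views. Your version merely spells out the intermediate step $RL((M_0 \otimes M_1) \| aut(C)) = \emptyset \iff L(M_0 \otimes M_1) \cap RL(aut(C)) = \emptyset$ more explicitly than the paper does.
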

	\begin{proof}
		Consider that $\|$ computes the intersection of the languages, while Theorem.~\ref{thm:autcorrect} states that $L(aut(C))$ contains exactly the traces satisfying $C$ (modulo a simple technical procedure to move between labelled traces and pairs of traces). Then it follows easily that $RL((M_0 \otimes M_1) \| aut(C))$ is empty only when there is no trace in $(M_0 \otimes M_1)$ that leads to a rejecting state in $aut(C)$. The same logic can be taken in the other direction. \qed

\end{proof}}

\subsection{Blame Assignment}

For the blame assignment, we can modify the automaton construction by adding two other violating states: $s^0_B$ and $s^{1}_B$, and adjust the transitions for the basic norms accordingly.

\new{\begin{definition}
		The deterministic \emph{blame automaton of contract} $C$ is: 
		\[blAut(C) \mydef \langle\Sigma _{0,1}, S, s_0, 
		\{s_B, s^0_B, s^1_B, (s^0_B, s^1_B)\}, \rightarrow\rangle\text{}\]
		
		We define $\rightarrow$ through the function $\trans(C, s_0, s_G, s^0_B, s^1_B, V)$ that computes a set of transitions, as in Definition~\ref{def:transm} but now assigning blame by transitioning to the appropriate state. We focus on a subset of the rules, given limited space, where there are substantial changes\footnote{The missing rules essentially mirror the previous construction with the added states, and the different domains.}:

		{\allowdisplaybreaks
			\begin{align*}
				\trans(\obl_p(a), s_0, s_G, s^0_B, s^1_B, V) &\mydef \{s_0 \xrightarrow{a_p \wedge a_{1-p}} s_G, s_0 \xrightarrow{\neg a_p} s^p_B, s_0 \xrightarrow{a_p \wedge \neg a_{1-p}} s^{1-p}_B \}\\
				\trans(\frb_p(a), s_0, s_G, s^0_B, s^1_B, V)  &\mydef \{s_0 \xrightarrow{\neg(a_p \wedge a_{1-p})} s_G, s_0 \xrightarrow{a_p \wedge a_{1-p}} s^{p}_B\}\\
				\trans(\perm_p(a), s_0, s_G, s^0_B, s^1_B, V)  &\mydef \{s_0 \xrightarrow{a_{p} \implies a_{1-p}} s_G, s_0 \xrightarrow{a_{p} \wedge \neg a_{1-p}} s^{1-p}_B\}\\
				\trans(C \blacktriangleright C', s_0, s_G, s^0_B, s^1_B, V)  &\mydef \trans(C, s_0, s_G, s^0, s^1, V) \\
				&\quad \cup \trans(C', s^0, s_G, s^0_B, V)
				\cup \trans(C', s^1, s_G, s^1_B, V) \\
		\end{align*}}
		
		\noindent
		
		Given $\rightarrow' = \trans(C, s_0, s_G, s_B, \{\})$, $\rightarrow$ is defined as $\rightarrow'$ with the following transformations, in order: (1) any tuple of states containing both $s^0_B$ and $s^1_B$ is relabelled as $(s^0_B, s^1_B)$; (2) any tuple of states containing $s^0_B$ ($s^1_B$) is relabelled as $s^0_B$ ($s^1_B$); (3) any state for which all outgoing transitions go to a bad state are redirected to $s_B$; (4) any tuple of states containing $s_G$ is relabelled as $s_G$; and (5) all bad states and $s_G$ become sink states. $S$ is the set of states used in $\rightarrow$. We assume the $\epsilon$-transitions are removed using standard methods.
	\end{definition}
	
	\new{Note how this automata simply refines the bad states of the original automata construction, by assigning blame for the violation of norms through a transition to an appropriate new state. While the post-processing (see (3)), allows violations caused by conflicts to go instead to state $s_B$, where no party is blamed.}
	
	Then we prove correspondence with the blame semantics:
	\begin{theoremrep}[Blame Analysis Soundness and Completeness]
		Where $RL_p$, for $p \in \{0,1\}$, is the rejecting language of the automaton through states that pass through $s^p_B$ or the tuple state $(s^0_B, s^1_B)$:
		
		$\emptyset = RL_p((M_0 \otimes M_1) \| blAut(C))$ iff $\nexists w_0, w_1 \in (2^\Sigma)^* : w_0 \sqcup^0_1 w_1 \in L(M_0 \otimes M_1)  \wedge (w_0,w_1) \viol^p C$.
\end{theoremrep}}

\begin{proof}
	This follows from a slight modification of Corollary.~\ref{thm:autbad} (since here we are just refining the bad states of $aut(C)$) with the replacement of $s_B$ by party-tagged states when someone can be blamed, and from a similar argument to Theorem.~\ref{thm:mc}.
\end{proof}

This automaton can be used for model checking as before, but it can also answer queries about who is to blame.
\begin{example}
	We illustrate in Figure~\ref{fig:composition} an example of two Moore machines representing the behaviour of two parties (Figures~\ref{fig:moore1} and \ref{fig:moore2}). Note these are deterministic, therefore their composition (Figure~\ref{fig:composition}) is just a trace. Note the same theory applies even when the Moore machines are non-deterministic.
	In Figures~\ref{fig:contracto} and \ref{figure:contractod} we show the automaton and blame automaton for the contract $rec X.(\obl_{1}(c) \blacktriangleright \obl_{0}(b);X)$.
	Our model checking procedure (without blame) will compose Figure~\ref{fig:composition} and Figure~\ref{fig:contracto}, and identify that the trace reaches the bad state. Consider that the reparation consisting of an obligation to perform an action $b$ was not satisfied. Similarly (not shown here) blame automaton would blame party 1 for the violation. 
\end{example}

\begin{figure}[t]
	\centering
	\begin{subfigure}[htbp]{0.42 \textwidth}
		\centering
		\scalebox{0.8}{
			\begin{tikzpicture}[shorten >=1pt,node distance=3cm,on grid,auto] 
				\node[state,initial] (s_0)   {$s_0/\mathrm{a_0}$}; 
				\node[state, right=of s_0] (s_1) {$s_1/\mathrm{b_0}$}; 
				\path[->] 
				(s_0) edge[loop above] node {$a_1$} ()
				edge[bend left=10] node {$\lnot a_1$} (s_1)
				(s_1) edge[loop above] node {$\lnot b_1$} ()
				edge[bend left=10] node {$b_1$} (s_0);
		\end{tikzpicture}}
		\caption{$M_0$: Moore machine for agent 0.}
		\label{fig:moore1}
	\end{subfigure}
	\hfill
	\begin{subfigure}[htbp]{0.42 \textwidth}
		\centering
		\scalebox{0.8}{
			\begin{tikzpicture}[shorten >=1pt,node distance=3cm,on grid,auto] 
				\node[state,initial] (s_0)   {$s_0/\mathrm{c_1}$}; 
				\node[state, right=of s_0] (s_1) {$s_1/\mathrm{b_1}$}; 
				\path[->] 
				(s_0) edge[loop above] node {$b_0$} ()
				edge[bend left=10] node {$\lnot b_0$} (s_1)
				(s_1) edge[loop above] node {$ \lnot{a_0}  $} ()
				edge[bend left=10] node {$a_0$} (s_0);
		\end{tikzpicture}}
		\caption{ $M_1$: Moore machine for agent 1.}
		\label{fig:moore2}
	\end{subfigure}
	
	\vspace{5mm}
	\begin{subfigure}[htbp]{0.95 \textwidth}
		\scalebox{0.8}{
			\begin{tikzpicture}[shorten >=1pt,node distance=3cm,on grid,auto, every state/.style={inner sep=6pt,font=\scriptsize}]
				\node[state,initial] (s_0) {$s_{0},s_{0}$};
				\node[state, right=of s_0] (s_1) {$s_{1},s_{1}$};
				\node[state, right=of s_1] (s_2) {$s_{0},s_{1}$};
				\node[state, right=of s_2] (s_3) {$s_{1},s_{0}$};
				\path[->]
				(s_0) edge[] node {$\{a_0, c_1\}$} (s_1)
				(s_1) edge[] node {$\{b_0, b_1\}$} (s_2)
				(s_2) edge[] node {$\{a_0, b_1\}$} (s_3)
				(s_3) edge[loop right] node {$\{b_0, c_1\}$} ();
		\end{tikzpicture}}
		\centering{\caption{The composition $M_0 \otimes M_1$.}}
		\label{fig:composition}
	\end{subfigure}
	\vspace{5mm}
	\begin{subfigure}[b]{0.45\textwidth}
		\centering
		\scalebox{0.8}{
			\begin{tikzpicture}[shorten >=1pt,node distance=3cm,on grid,auto] 
				\node[state, initial] (q0) {$s_0$}; 
				\node[state] (q1) [right=of q0] {$s_G$}; 
				\node[state] (q2) [below=1.5cm of q0] {$s_2$};
				\node[state, accepting] (q3) [right=of q2] {$s_B$};	
				\path[->] 
				(q0) edge [bend left = 10] node {$a_0 \wedge a_1$} (q1)
				(q0) edge [] node[swap] {$\lnot (a_0, \wedge a_1)$} (q2)
				(q2) edge [right] node {$\begin{array}{l}\\b_0 \wedge b_1\end{array}$} (q1)
				(q2) edge [] node[ below] {$\lnot (b_0 \wedge b_1)$} (q3)
				(q1) edge [bend left = 10] node {$\epsilon$} (q0); 
		\end{tikzpicture}}
		\caption{$C = aut(rec X. (\obl_1(a) \blacktriangleright \obl_0(b));X)$}
		\label{fig:contracto}
	\end{subfigure}
	\hfill
	\begin{subfigure}[b]{0.45\textwidth}
		\centering
		\scalebox{0.8}{
			\begin{tikzpicture}[shorten >=1pt,node distance=3cm,on grid,auto] 
				\node[state, initial] (q0) {$s_0$}; 
				\node[state] (q1) [right=of q0] {$s_G$}; 
				\node[state] (q2) [below=2cm of q0] {$s_2$};
				\node[state, accepting] (q3) [below=1cm of q1] {$s_B^0$};
				\node[state, accepting] (q4) [below=1cm of q3] {$s_B^1$};
				\path[->] 
				(q0) edge [bend left=10] node {$a_0 \wedge a_1$} (q1)
				(q0) edge [] node[swap] {$\lnot(a_0 \wedge a_1)$} (q2)
				(q2) edge [bend left = 10] node {$\qquad b_0 \wedge b_1$} (q1)
				(q2) edge [] node[] {$\qquad \ \lnot b_0$} (q3)
				(q2) edge [] node[] {$\qquad b_0 \wedge \lnot b_1$} (q4)
				(q1) edge [bend left=10] node {$\epsilon$} (q0); 
		\end{tikzpicture}}
		\caption{$blAut(rec X. (\obl_1(a) \blacktriangleright \obl_0(b));X)$}
		\label{figure:contractod}
	\end{subfigure}
	\vspace{5mm}
	
	\begin{subfigure}[htbp]{0.95 \textwidth}
		\scalebox{0.8}{
			\begin{tikzpicture}[shorten >=1pt,node distance=1.5cm,on grid,auto, every state/.style={inner sep=6pt,font=\scriptsize}]
				\node[state,initial] (s_0) {$s_2$};
				\node[state, right=of s_0] (s_1) {$s_G$};
				\node[state, right=of s_1] (s_2) {$s_0$};
				\node[state, right=of s_2] (s_3) {$s_2$};
				\node[state, right=of s_3] (s_4) {$s_B$};
				\path[->]
				(s_0) edge node {} (s_1)
				(s_1) edge node {} (s_2)
				(s_2) edge node {} (s_3)
				(s_3) edge node {} (s_4);
				
		\end{tikzpicture}}
		\centering{\caption{$(M_0 \otimes M_1) \|C$}}
		\label{fig:moore3}
	\end{subfigure}
	\caption{Example of the model checking approach.}
	\label{fig:composition}
\end{figure}

\vspace{-5pt}

\section{Related Work}\label{sec:relatedwork}

{\bf Multi-agent systems.} 
A number of logics can express properties about {multi-agent systems}. For example, ATL can express the existence of a strategy for one or more agents to enforce a certain specification \cite{alur2002alternating}, while strategy logic makes strategies first-class objects \cite{10.1007/978-3-540-74407-8_5}. Checking for the existence of strategies is in 2EXPTIME. Our logic is not concerned with the existence of strategies, but with analyzing the party strategies to ensure they respect a contract. 
So, our approach is more comparable to LTL than to game-based logic, limited to (co-)safety properties and with a notion of norms that allows us to talk about blame natively. 

Concerning blame, 
\cite{Friedenberg_Halpern_2019} considers the notion of \textit{blameworthiness}. They use structural equations to represent agents, but the approach is not temporal, and each agent performs only one action.  Work in this area (e.g., \cite{Friedenberg_Halpern_2019,10.5555/3504035.3504261,ChocklerH03}) tends to be in a different setting than ours. 

They consider the cost of actions and agents' beliefs about the probability of their actions not achieving the expected outcome. Instead, we assume all the parties have knowledge of the contract, and we take an automata-theoretic approach. Moreover, our blame derives from the norms, whereas other work depends on a notion of causality \cite{DBLP:journals/corr/Chockler16}.

The work \cite{DBLP:conf/atal/AbarcaB22} extends {\em STIT logic}  with notions of responsibility, allowing reasoning about blameworthiness and praiseworthiness. This, and other similar work (e.g., \cite{10.1093/logcom/ext072}) is more related to our work and even has a richer notion of blame. However, we give an automata-based model checking procedure.
\

\vspace*{1ex}
\noindent
{\bf Deontic logics} 
{Deontic logics} have been used in a multi-agent setting before. For example, \cite{10.1007/11786849_7} define deontic notions in terms of ATL, allowing reasoning like \textit{an obligation holds for an agent iff they have a strategy to carry it out}. These approaches (e.g., \cite{10.1007/11786849_7,10.1145/860575.860609,DBLP:conf/hybrid/Shea-BlymyerA20}) focus on obligations and neglect both reparations and our view of permissions as rights. Some approaches (e.g., \cite{10.1145/860575.860609,10.1007/978-3-540-25927-5_15}) however do perform model checking for a deontic logic in a multi-agent system setting. The work most similar to ours is that of \textit{contract automata} \cite{10.1007/s10506-016-9185-2}, wherein a contract is represented as a Kripke structure (with states tagged by norms), two parties as automata, and permissions with a similar rights-based view. However, it takes a purely operational approach, 
\
and lacks a notion of blame. 

Our language is an extension and combination of the deontic languages presented in \cite{DBLP:conf/jurix/AzzopardiGP16,DBLP:conf/jurix/AzzopardiPS18,DBLP:conf/wollic/PrisacariuS09}, combining action attempts, a right-based view of permission, a two-party setting, and regular expressions as conditions. 

Besides maintaining all these, we give denotational trace semantics, and provide blame and model checking algorithms.

\section{Conclusions}\label{sec:conclusion}

In this paper we have introduced a deontic logic
for reasoning about a two-party synchronous setting. 
This logic allows one to define constraints on when parties should support or non-interfere with the carrying out of a certain action or protocol. Using a pair of party traces, we can talk about attempts and success to perform collaborative actions. We consider automata constructions describing both the set of all satisfying and violating sequences. 
Given the behavior of the agents in the form of suitable automata, we have also provided algorithms for model checking and for blame assignment.
To differentiate between satisfying a formula in the expected manner or by fullfilling the exceptional case, we introduce a quantitative semantics. This allows ordering satisfying traces depending on how often they use these exceptions.

This work may be extended in many directions. First,  we could consider asynchronous interaction, distinguishing between sending and receiving.
\new{The syntax and semantics can also be extended easily to handle multi-party agents rather than just a two-party setting.}
Different quantitative semantics could be given, for example considering the \emph{costs of actions} to reason when it is better to pay a fine rather than to behave as expected. 
We plan to study how to synthesise strategies for the different parties, for instance to ensure the optimal behaviour of agents.

\bibliographystyle{splncs04}
\bibliography{lib}
\end{document}